\newtheorem{theorem}{Theorem}
\newtheorem{lemma}{Lemma}
\newtheorem{definition}{Definition}
\newtheorem{corollary}{Corollary}
\newcommand{\bra}[1]{\langle#1|}
\newcommand{\ket}[1]{|#1\rangle}
\newcommand{\resetAppendixCounters}[1]{%
  \setcounter{lemma}{0}
  \renewcommand{\thelemma}{#1.\arabic{lemma}}
  \setcounter{proposition}{0}
  \renewcommand{\theproposition}{#1.\arabic{proposition}}
  \setcounter{theorem}{0}
  \renewcommand{\thetheorem}{#1.\arabic{theorem}}
  \setcounter{corollary}{0}
  \renewcommand{\thecorollary}{#1.\arabic{corollary}}
  \setcounter{definition}{0}
  \renewcommand{\thedefinition}{#1.\arabic{definition}}
  \setcounter{subsection}{0}

  \renewcommand{\theequation}{#1.\arabic{equation}}
  \renewcommand{\thefigure}{#1.\arabic{figure}}
  \renewcommand{\theHfigure}{#1.\arabic{figure}} 
  \setcounter{figure}{0}
  \setcounter{equation}{0}
}
\begin{document}

\title{Tight Generalization Bound for Supervised Quantum Machine Learning}

\author{Xin Wang}
\affiliation{
	Department of Automation, Tsinghua University, Beijing, 100084, P. R. China
}

\author{Rebing Wu}
\email{rbwu@tsinghua.edu.cn}
\affiliation{
	Department of Automation, Tsinghua University, Beijing, 100084, P. R. China
}

\begin{abstract}
  We derive a tight generalization bound for quantum machine learning that is applicable to a wide range of supervised tasks, data, and models. Our bound is both efficiently computable and free of big-O notation.  Furthermore, we point out that previous bounds relying on big-O notation may provide misleading suggestions regarding the generalization error. Our generalization bound demonstrates that for quantum machine learning models of arbitrary size and depth, the sample size is the most dominant factor governing the generalization error. Additionally, the spectral norm of the measurement observable, the bound and Lipschitz constant of the selected risk function also influence the generalization upper bound. However, the number of quantum gates, the number of qubits, data encoding methods, and hyperparameters chosen during the learning process such as batch size, epochs, learning rate, and optimizer do not significantly impact the generalization capability of quantum machine learning. We experimentally demonstrate the tightness of our generalization bound across classification and regression tasks. Furthermore, we show that our tight generalization upper bound holds even when labels are completely randomized. We thus bring clarity to the fundamental question of generalization in quantum machine learning. 
\end{abstract}

\maketitle

\noindent\textbf{\textit{Introduction}.--}Quantum machine learning (QML)~\cite{biamonte2017quantum} has shown significant promise in harnessing quantum properties to achieve advantages in supervised machine learning tasks~\cite{riste2017demonstration,huang2022quantum,cerezo2022challenges,liu2021rigorous,huang2021information}. The contemporary QML paradigm primarily combines parameterized quantum circuits with task-specific observables to form QML models~\cite{benedetti2019parameterized}, which are then trained by a classical optimizer to process both classical and quantum data~\cite{beer2020training,abbas2021power}. QML has demonstrated remarkable potential in addressing challenges across diverse domains, including quantum physics~\cite{monaco2023quantum,feng2025uncovering}, financial analysis~\cite{thakkar2024improved,doosti2024brief}, image classification~\cite{senokosov2024quantum,wei2023quantum}, and beyond. While practical applications  continue to emerge, theoretical analysis of QML's fundamental properties provides even greater value by deepening our understanding and illuminating the path forward for future applications in this interdisciplinary field~\cite{schuld2022quantum}.

Supervised Machine learning aims to achieve good performance on both training data and unseen data, with generalization capability measuring the difference between a model's predictive performance on training and unseen data~\cite{mohri2018foundations}. Small training error combined with small generalization error ensures good predictive performance. Since generalization error cannot be directly measured, theoretical generalization bounds become indispensable, as they link the error to quantifiable factors like sample size and model complexity, offering critical guidance for practice. For QML tasks, existing generalization bounds face several issues: (1) the bounds contain big-O notation, omitting potentially large constant factors, resulting in practically useless upper bounds that may provide misleading guidance regarding the number of parameterized quantum gates~\cite{caro2022generalization}, encoding methods~\cite{caro2021encodingdependent}, and optimization approaches~\cite{yang2025stability};  (2) the bounds involve expectations~\cite{caro2024information}, mutual information~\cite{banchi2021generalization}, quantum Fisher information~\cite{khanal2025data}, and other quantities that are difficult to compute or estimate~\cite{hur2024understanding}, and hence their practical utility is limited; and (3) the bounds depend on specific forms of risk functions~\cite{caro2022generalization,banchi2021generalization}, failing to provide unified analysis for different risk functions in QML.

In this work, we derive a tight uniform generalization bound for QML that is free of big-O notation, provides directly actionable information, and applies to general risk functions. This generalization bound reveals a surprising fact: for QML models, increasing parameterized quantum gates, increasing data dimensionality, changing classical data encoding methods, or modifying model optimization strategies, the generalization capability cannot be significantly altered. The generalization capability depends solely on the training sample size, as shown in Fig.~\ref{fig:illustration}. Our generalization bound explains the good generalization characteristics observed in existing QML works~\cite{schuld2020circuit,hubregtsen2022training,henderson2020quanvolutional,xu2024quantum,liu2021rigorous,caro2022generalization,wang2024supervised}.

\begin{figure*}[htpb]
  \centering
  \includegraphics[width=0.95\textwidth]{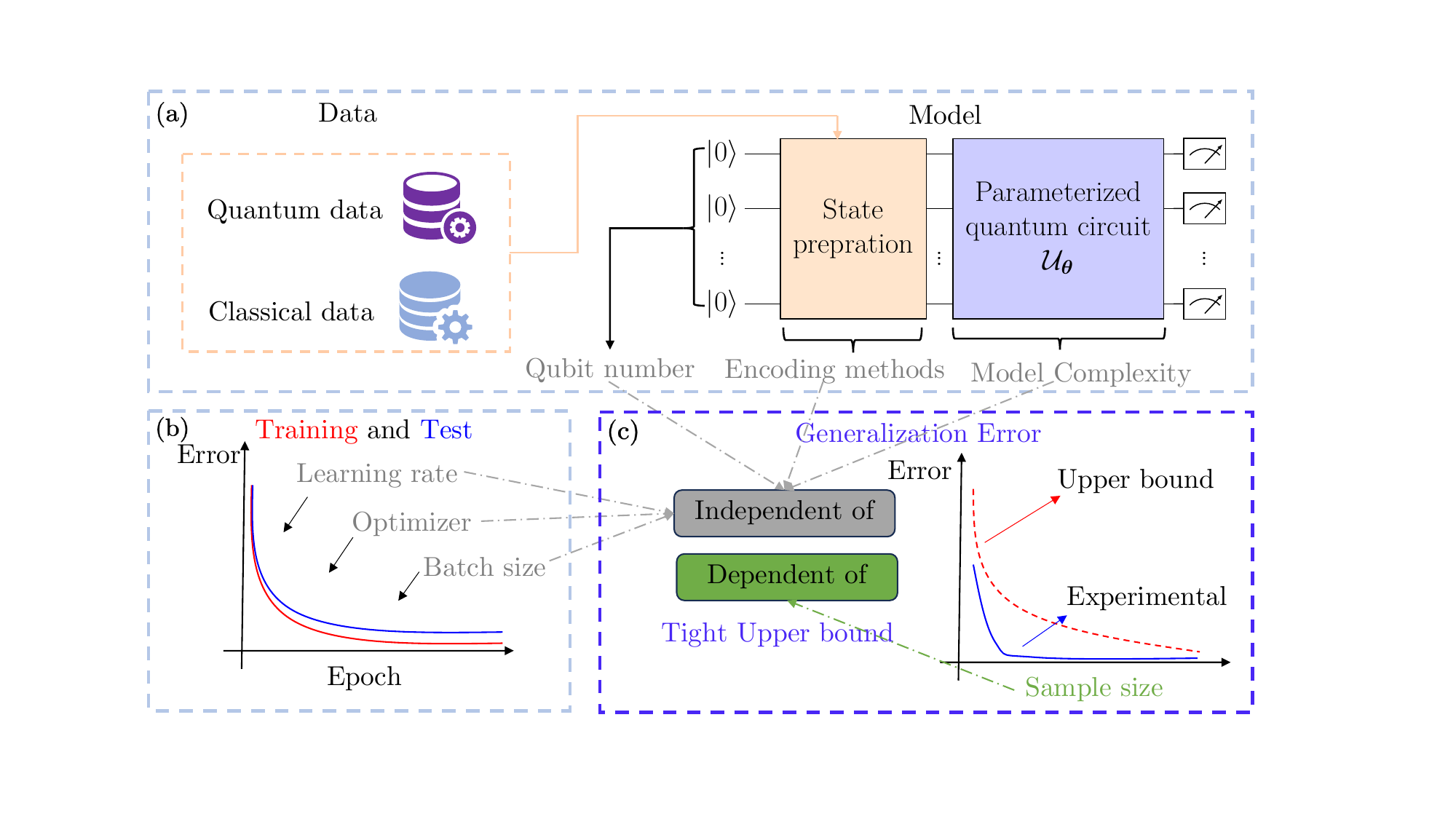}
  \caption{(a) Quantum machine learning workflow: Quantum data (quantum states) are prepared using  quantum circuits, or classical data are encoded into quantum states through some encoding scheme. The input is then processed through parameterized quantum circuits for learning, and the output are obtained through measurements. The model characteristics include number of qubits, encoding methods, and model complexity. (b) Training and test errors decrease as training epochs increase. During the training process of QML models, the choice of learning rate, optimizer, and batch size may all influence the training dynamics. (c) Our derived theoretical generalization error upper bound is tight and depends only on sample size, independent of the number of qubits, encoding methods, model complexity, learning rate, optimizer, and batch size.}
  \label{fig:illustration}
\end{figure*}

\noindent\textbf{\textit{Supervised QML}}.--- In the context of supervised QML, we consider a training dataset of $M$ samples $S = \{(\rho^{(m)},y^{(m)})\}_{m=1}^M$, where $\rho^{(m)} \in \mathbb{H}$ represents a quantum state density matrix in Hilbert space $\mathbb{H}$, and $y^{(m)} \in \mathcal{Y}$ represents its corresponding label from the label space $\mathcal{Y}$. Each sample $(\rho^{(m)},y^{(m)})$ is drawn from an unknown joint distribution $\mathcal{D}$ over the product space of Hilbert space $\mathbb{H}$ and label space $\mathcal{Y}$. The goal of QML is to learn a hypothesis $h_S \in \mathcal{H}$ from the training set $S$ that can generalize well to unseen data sampled from the same unknown distribution $\mathcal{D}$, where $\mathcal{H}$ is hypothesis space.

In this work, we consider quantum states that can be either direct quantum data, such as ground states of Hamiltonians, or quantum states obtained from classical data through angle encoding, amplitude encoding, or other encoding methods. We consider a broader class of QML models that can be any parameterized quantum circuit, not just parameterized quantum circuits. A QML model $\mathcal{U}_{\boldsymbol{\theta}}(\cdot)$ with parameters $\boldsymbol{\theta}$, when given an input quantum state $\rho$, produces an output $h(\rho,\boldsymbol{\theta}) = \operatorname{Tr}\left[O \mathcal{U}_{\boldsymbol{\theta}}(\rho)\right]$ with respect to an observable $O$. The hypothesis learned from the training set is $h_S(\rho) = h(\rho,\boldsymbol{\theta}^{*})$, where $\boldsymbol{\theta}^{*}$ are the parameters determined during the optimization process.

In machine learning, we evaluate the predictive performance of a hypothesis $h_S$ on a quantum state $\rho$ using a per-sample risk function $r(y',y) = r(h_S(\rho),y)$, where $y' = h_S(\rho) \in \mathcal{Y}$ is the predicted label by $h_S$ and $y$ is the true label. We define the empirical error of the hypothesis $h_S$ learned from training set $S$ on a dataset $S' = \{(\rho'^{(m)},y'^{(m)})\}_{m=1}^M$ as:
$$
\begin{aligned}
\widehat{R}_{S'}(h_S) = \frac{1}{M} \sum_{m=1}^M r(h_S(\rho'^{(m)}),y'^{(m)}).
\end{aligned}
$$
When $S' = S$, the empirical error $\widehat{R}_S(h_S)$ is the training error. When the dataset $S'$ contains completely unseen data is drawn from the same distribution as the training set $S$, we call $\widehat{R}_{S'}(h_S)$ the test error and $S'$ the test set. It is worth noting that to reduce the training error, we minimize a selected loss function $\mathcal{L}(\boldsymbol{\theta};S)$ by updating model parameters $\boldsymbol{\theta}$ on the training set $S$. Here, the loss function differs from the training error: the loss function is primarily used for training, while the training error and per-sample risk function $r(\cdot,\cdot)$ are mainly used to assess the training results. The prediction error of the hypothesis $h_S$ learned from training set $S$ on the distribution $\mathcal{D}$ is defined as:
$$
\begin{aligned}
R(h_S) = \underset{(\rho,y) \sim \mathcal{D}}{\mathbb{E}}[r(h_S(\rho),y)].
\end{aligned}
$$
 The fundamental goal of machine learning is to ensure that models with good training performance also perform well on unseen data, which is known as generalization capability. To quantify a model's generalization capability, we define the generalization error $\operatorname{gen}(h_S)$ as:
$$
\begin{aligned}
\operatorname{gen}(h_S) = R(h_S) - \widehat{R}_S(h_S).
\end{aligned}
$$
When the risk function $r(\cdot,\cdot)$ has a range of $[0,C]$, the generalization error has a range of $[-C,C]$. Although the generalization error can be negative, indicating that the prediction error is smaller than the training error, meaning the model performs better on unseen data than on training data, we are more concerned with cases where the prediction error exceeds the training error. Therefore, in theoretical analysis, we primarily focus on the upper bound of the generalization error $\operatorname{gen}(h_S)$. Since the prediction error cannot be measured directly as the data distribution D is usually unknown, we typically estimate it using the test error $\widehat{R}_{S'}(h_S)$ to further evaluate the generalization error.

\noindent\textbf{\textit{Pauli Basis Representation in Quantum Machine Learning}.--} In this section, we convert quantum states, parameterized quantum circuits, and quantum measurements into representations in the Pauli basis to better derive generalization bounds.

Consider the density matrix $\rho$ of an $N$-qubit quantum state, which can be decomposed in the Pauli basis as:
\begin{equation}
  \label{eq:pauli}
\begin{aligned}
\rho = \frac{1}{2^{N}} \left( \sum_{P_i \in \{I,Z,Y,X\}^{\otimes N}}^{} \alpha_i P_i \right) ,
\end{aligned}
\end{equation}
where $\alpha_i = \operatorname{Tr}\left[\rho P_i\right]$ represents the coefficient of the Pauli basis $P_i$. Since $P_i$ is Hermitian, $\alpha_i$ is necessarily real, and the summation encompasses $4^{N}$ terms. Consequently, any quantum state $\rho$ can be uniquely represented as a coefficient vector $\boldsymbol{\alpha} = \begin{bmatrix} 
    \alpha_1 & \cdots & \alpha_{4^N} 
\end{bmatrix}^{\top}$ determined by these $4^{N}$ Pauli coefficients. This representation is unique because a quantum state is completely characterized by its expectation values $\operatorname{Tr}[\rho P_i]$ over a complete set of observables, such as the full Pauli basis. The squared $\ell^{2}$-norm of the Pauli coefficient vector $\boldsymbol{\alpha}$ is proportional to the purity of the quantum state. Specifically, $\|\boldsymbol{\alpha}\|_{2}^{2} = 2^{N} \operatorname{Tr}[\rho^2]$~\cite{wangpredictive}. Since we consider input states that are pure states in quantum circuits, it follows that $\|\boldsymbol{\alpha}\|_{2}^2 = 2^{N}$.

Consider a quantum state $\rho$ with its corresponding Pauli coefficient vector $\boldsymbol{\alpha}$. After passing through a parameterized quantum circuit, the resulting quantum state $\mathcal{U}_{\boldsymbol{\theta}}(\rho) = U(\boldsymbol{\theta}) \rho U(\boldsymbol{\theta})^{\dagger}$ has a corresponding Pauli coefficient vector $\boldsymbol{\beta}$. Due to the linearity of quantum circuits with respect to quantum states $\rho$, we have $\mathcal{U}_{\boldsymbol{\theta}}(c_{1} \rho_{1} + c_{2} \rho_2) = c_{1} \mathcal{U}_{\theta}(\rho_1) + c_{2} \mathcal{U}_{\theta}(\rho_2)$ and the corresponding Pauli coefficient transfer mapping $\mathcal{T}_{\boldsymbol{\theta}}(\cdot)$ satisfies $\mathcal{T}_{\boldsymbol{\theta}}(c_{1} \boldsymbol{\alpha}_1 + c_{2} \boldsymbol{\alpha}_2) = c_{1} \boldsymbol{\beta}_1 + c_{2} \boldsymbol{\beta}_2 = c_{1} \mathcal{T}_{\boldsymbol{\theta}}(\boldsymbol{\alpha}_1) + c_{2} \mathcal{T}_{\boldsymbol{\theta}}(\boldsymbol{\alpha}_2)$. Therefore, the mapping $\mathcal{T}_{\boldsymbol{\theta}}$ is linear and can be represented by a transfer matrix $T(\boldsymbol{\theta})$ in the Pauli basis. Here, the form of the transfer matrix $T$ reflects the architecture of the parameterized quantum circuit, while $\boldsymbol{\theta}$ reflects its parameters.

As shown in Theorem~C.1 of \cite{wangpredictive}, the Pauli transfer matrix of a quantum circuit is orthogonal. This property arises from the purity-preserving nature of unitary evolution: for a pure state $\rho$ and its evolved state $\mathcal{U}_{\boldsymbol{\theta}}(\rho)$ under a quantum circuit, their corresponding Pauli coefficient vectors $\boldsymbol{\alpha}$ and $\boldsymbol{\beta}$ are related by $\boldsymbol{\beta} = T(\boldsymbol{\theta}) \boldsymbol{\alpha}$. Since unitary evolution preserves purity, we have $\|\boldsymbol{\beta}\|_2^2 = \boldsymbol{\alpha}^\top T(\boldsymbol{\theta})^\top T(\boldsymbol{\theta}) \boldsymbol{\alpha} = \|\boldsymbol{\alpha}\|_2^2$. This equality holds for all $\boldsymbol{\alpha}$ if and only if $T(\boldsymbol{\theta})^\top T(\boldsymbol{\theta}) = I$. It is worth noting that while the transfer matrix of any quantum circuit is orthogonal, not all orthogonal matrices correspond to realizable quantum circuits.

\begin{figure*}[htpb]
  \centering
  \includegraphics[width=0.9\textwidth]{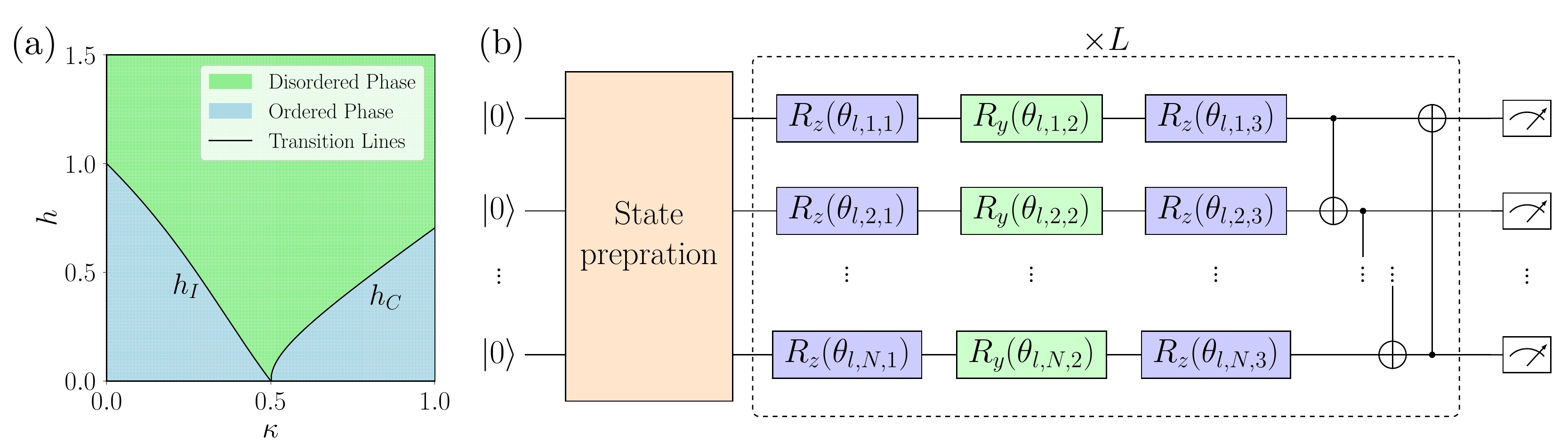}
  \caption{(a) Phase diagram of the axial next-nearest-neighbor Ising (ANNNI) model, illustrating the boundaries between ordered and disordered phases in the parameter space defined by  $\kappa$ and  $h$; (b) Parameterized quantum circuit architecture for  phase classification consisting of $L$ layers of rotation gates and controlled gates, where each qubit undergoes rotations $R_z(\theta_{1}) R_y(\theta_2) R_z(\theta_3)$ followed by ring-pattern CNOT gates creating entanglement.}
  \label{fig:experiments}
\end{figure*}

For the observable $O$, we adopt tensor products of Pauli operators commonly used in QML~\cite{wang2025limitations,li2022concentration}, such as $B_O \cdot Z \otimes I \otimes I$, $B_O \cdot X \otimes Y \otimes I$, or $B_O \cdot X \otimes Y \otimes Z$, where $X$, $Y$, $Z$ denote Pauli matrices and $B_O$ is the spectral norm of $O$. Since the observable operator is a Pauli string corresponding to the Pauli basis $P_i$ in Eq.~\eqref{eq:pauli}, we can represent it in the unnormalized Pauli basis (without the factor $1 / 2^{N}$). The Pauli coefficient vector corresponding to the observable is $\boldsymbol{m} = \begin{bmatrix} 
    0 & \cdots & 1 & \cdots & 0
\end{bmatrix}^{\top}$, which equals $1$ only at the $i$-th position and $0$ at all other positions. In the Pauli basis representation, the measurement result of quantum state $\rho$ with respect to the observable $O$, denoted as $\operatorname{Tr}[O \rho]$, can be elegantly expressed as the product of their coefficient vectors $B_{O} \cdot \boldsymbol{m}^{\top} \boldsymbol{\alpha}$, where $|\boldsymbol{m}^{\top} \boldsymbol{\alpha}| \leqslant 1$ due to the fact that the expectation value of any observable is bounded by its spectral norm. For more detailed proofs of this representation, please refer to SM.~\ref{asec:pauli_basis}.

In summary, the output of a QML model can be decomposed into three components: the input state's Pauli coefficient vector, the transfer matrix of the parameterized quantum circuit, and the Pauli basis representation of the observable. This output can be represented as $h(\rho,\boldsymbol{\theta}) = \operatorname{Tr}\left[O \mathcal{U}_{\boldsymbol{\theta}}(\rho)\right] = B_O \cdot \boldsymbol{m}^{\top} T(\boldsymbol{\theta}) \boldsymbol{\alpha} = B_O \cdot \boldsymbol{w}(T,\boldsymbol{\theta})^{\top} \boldsymbol{\alpha}$, where $\boldsymbol{w} = \boldsymbol{w}(T,\boldsymbol{\theta}) = T ^{\top}(\boldsymbol{\theta}) \boldsymbol{m}$ and $\|\boldsymbol{w}\|_{2} = 1$ due to the orthogonality of $T(\boldsymbol{\theta})$. Therefore, the hypothesis $h_S(\rho)$ can be represented as a function of the Pauli coefficient vector as $h_S(\boldsymbol{\alpha})$.

\noindent\textbf{\textit{Generalization Bound for QML}.--} In this section, we firstly introduce the basis concept of bound of generalization error, then derive the generalization bound for QML in different scenarios.

In machine learning theory, the generalization error can be bounded using the complexity of the hypothesis space, typically measured by Rademacher complexity~\cite{mohri2018foundations}. The empirical Rademacher complexity of a hypothesis space $\mathcal{H}$ with respect to a dataset sample $S = \{(x^{(m)},y^{(m)})\}_{m=1}^M$ is defined as $$\widehat{\mathfrak{R}}_{S}(\mathcal{H})=\underset{\boldsymbol{\sigma}}{\mathbb{E}}\left[\sup _{h \in \mathcal{H}} \frac{1}{M} \sum_{m=1}^{M} \sigma_{m} h\left(x^{(m)}\right)\right].$$ Here, $\boldsymbol{\sigma}=\left[\sigma_{1}, \ldots, \sigma_{M}\right]^{\top}$ and $ \sigma_{m} $ is independent and uniformly distributed in $\{-1,+1\}$.

Furthermore, the hypothesis space generated by QML models $\mathcal{H}_Q$ is a subset of $\mathcal{H} = \{h(\boldsymbol{x}) = \boldsymbol{w}^{\top} \boldsymbol{x} : \|\boldsymbol{w}\|_2=1,|\boldsymbol{w}^{\top}\boldsymbol{x}| \leqslant 1 \}$, since $\boldsymbol{w} = T ^{\top} \boldsymbol{m}$ in $\mathcal{H}_Q$ requires that $T$ corresponds to a realizable quantum circuit. We prove that the Rademacher complexity of $\mathcal{H}$ with respect to dataset $S$ with $M$ samples is $\widehat{\mathfrak{R}}_{S}(\mathcal{H}) \leqslant \sqrt{1 / M}$ and $\widehat{\mathfrak{R}}_{S}(\mathcal{H}_Q) \leqslant \widehat{\mathfrak{R}}_{S}(\mathcal{H}) \leqslant \sqrt{1 / M}$ (see SM.~\ref{asec:generalization_bound}). Thus, we can derive the generalization bound for QML directly. The following theorem formally states this result.

\begin{theorem}
  \label{thm:generalization_bound}
  Let \(\mathcal{D}\) be a data distribution over \(\mathcal{X} \times \mathcal{Y}\), and let \(S = \{(\boldsymbol{\alpha}^{(m)}, y^{(m)})\}_{m=1}^M\) be a dataset of \(M\) independent and identically distributed (i.i.d.) samples drawn from \(\mathcal{D}\). Let the observable $O$ be a Pauli string with spectral norm $B_O$. Consider a QML model trained on $S$ with respect to the observable $O$, which produces a hypothesis $h_S \in \mathcal{H}_Q$. Assume the non-negative risk function $r: \mathcal{Y} \times \mathcal{Y} \rightarrow \mathbb{R}$ is uniformly bounded by $C > 0$ and is $L$-Lipschitz in its first variable for any fixed $y \in \mathcal{Y}$. Then, with probability at least $1 - \delta$ over the random sampling of $S$, the generalization error of $h_S$ satisfies:
   \begin{equation*}
   \begin{aligned}
     \operatorname{gen}(h_S)  \leqslant 2 L B_O\sqrt{\frac{1}{M}} +3C \sqrt{\frac{\log \frac{2}{\delta}}{2 M}}.
   \end{aligned}
   \end{equation*}
 \end{theorem}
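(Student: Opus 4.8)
The plan is to reduce the claim to a standard Rademacher-complexity generalization bound applied to the loss class induced by the risk function, and then to control that class's complexity through the contraction principle together with the already-established estimate $\widehat{\mathfrak{R}}_S(\mathcal{H}_Q) \le \sqrt{1/M}$.

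First I would introduce the loss class
\[
  \mathcal{G} = \bigl\{ (\boldsymbol{\alpha}, y) \mapsto r\bigl(B_O\, g(\boldsymbol{\alpha}), y\bigr) : g \in \mathcal{H}_Q \bigr\},
\]
recalling that each QML prediction takes the form $h_S(\rho) = B_O\, \boldsymbol{w}^\top \boldsymbol{\alpha} = B_O\, g(\boldsymbol{\alpha})$ with $g \in \mathcal{H}_Q$. Because $r$ is valued in $[0,C]$, every member of $\mathcal{G}$ is bounded in $[0,C]$, so the standard uniform-convergence bound for bounded loss families (obtained from symmetrization and McDiarmid's inequality \cite{mohri2018foundations}) applies: with probability at least $1-\delta$ over the draw of $S$, \emph{simultaneously} for every hypothesis whose normalized part lies in $\mathcal{H}_Q$,
\[
  R(h) \le \widehat{R}_S(h) + 2\,\widehat{\mathfrak{R}}_S(\mathcal{G}) + 3C\sqrt{\frac{\log(2/\delta)}{2M}}.
\]
Since this holds uniformly over the class, it holds in particular for the data-dependent trained hypothesis $h_S$; this uniformity is exactly what renders the data-dependence of $h_S$ harmless.

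The second step bounds $\widehat{\mathfrak{R}}_S(\mathcal{G})$ in terms of $\widehat{\mathfrak{R}}_S(\mathcal{H}_Q)$. For each fixed index $m$, the map $u \mapsto r(B_O u, y^{(m)})$ is the composition of the $L$-Lipschitz risk with scaling by $B_O$, hence $(L B_O)$-Lipschitz. Applying Talagrand's contraction lemma in its per-coordinate form gives $\widehat{\mathfrak{R}}_S(\mathcal{G}) \le L B_O\, \widehat{\mathfrak{R}}_S(\mathcal{H}_Q)$, and substituting $\widehat{\mathfrak{R}}_S(\mathcal{H}_Q) \le \sqrt{1/M}$ yields $\widehat{\mathfrak{R}}_S(\mathcal{G}) \le L B_O \sqrt{1/M}$. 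Inserting this into the uniform-convergence inequality and rearranging $\operatorname{gen}(h_S) = R(h_S) - \widehat{R}_S(h_S)$ reproduces the stated bound term for term.

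The computation is routine; the one point requiring care is the bookkeeping of the spectral-norm factor $B_O$. It should be absorbed into the Lipschitz constant of the contraction step (producing $L B_O$) rather than into the hypothesis class, so that the previously proved complexity estimate for the \emph{normalized} class $\mathcal{H}_Q$ (with $\|\boldsymbol{w}\|_2 = 1$) can be invoked verbatim. A related subtlety is that the contraction lemma must be used in its coordinatewise version, because the Lipschitz map $r(B_O\,\cdot\,, y^{(m)})$ depends on $m$ through the label $y^{(m)}$.
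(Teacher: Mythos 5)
Your proposal is correct and follows essentially the same route as the paper's proof: the uniform Rademacher generalization bound (Theorem 3.3 of Mohri et al.) applied to the loss class, Talagrand's contraction lemma to pass from the loss class to the hypothesis class, and the previously established estimate $\widehat{\mathfrak{R}}_{S_{\mathcal{X}}}(\mathcal{H}_Q) \leqslant \widehat{\mathfrak{R}}_{S_{\mathcal{X}}}(\mathcal{H}) \leqslant \sqrt{1/M}$. If anything, your bookkeeping of $B_O$ (folding it into the Lipschitz constant $LB_O$ of the per-coordinate contraction map $u \mapsto r(B_O u, y^{(m)})$) is slightly more explicit than the paper's, which works with the normalized class and then states that the factor $B_O$ is incorporated at the end.
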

 This generalization upper bound holds with probability at least $1 - \delta$. That is, among all possible datasets composed of $M$ quantum states and their labels, if we randomly sample a training set $S$ to train a hypothesis $h_S$, the probability that its generalization error does not exceed this bound is at least $1 - \delta$. The detailed proof is provided in SM.~\ref{asec:generalization_bound}.

Specifically, in regression problems, when the target function $f(\boldsymbol{x}) \in [0,1]$, we choose the observable $O_Z= \bigotimes_{n=1}^{N} Z_n$ with $B_O = 1$, and the per-sample risk function is absolute function $r(h_S(\boldsymbol{\alpha}),y) = |h_S(\boldsymbol{\alpha}) - y|$, which has an upper bound of $C = 1$ and a Lipschitz constant of $L = 1$. Then, the generalization bound satisfies the following inequality with probability at least $1-\delta$:
\begin{equation}
  \label{eq:regression_bound}
\begin{aligned}
  \operatorname{gen}(h_S) \leqslant \frac{2}{\sqrt{M}} + 3\sqrt{\frac{\log \frac{2}{\delta} }{2M}}.
\end{aligned}
\end{equation}

In binary classification problems with labels $y \in \{-1,1\}$, the 0-1 risk function $r(h_S(\boldsymbol{\alpha}),y) = \mathbbm{1}(h_S(\boldsymbol{\alpha}) \neq y)$ is commonly used, where $r(h_S(\boldsymbol{\alpha}),y) = 1$ when $h_S(\boldsymbol{\alpha}) \neq y$, and $r(h_S(\boldsymbol{\alpha}),y) = 0$ otherwise, with an obvious bound $C=1$. Although this 0-1 risk function itself is not Lipschitz continuous, it can be treated as  $L= 1 / 2$ for generalization bound analysis purposes (see SM.~\ref{asec:generalization_bound}). Therefore, by choosing the observable $O = Z_{1}$ with $B_O = 1$, the generalization bound satisfies the following inequality:
\begin{equation}
  \label{eq:classification_bound}
\begin{aligned}
  \operatorname{gen}(h_S) \leqslant \frac{1}{\sqrt{M}} + 3\sqrt{\frac{\log \frac{2}{\delta} }{2M}}.
\end{aligned}
\end{equation}
For $K$ classification problems, we can solve the problem by decomposing it into $K$ binary classification problems, where each binary classification problem selects one class as the positive class and the rest as the negative class. The generalization bound for the $K$ classification problem is:
$$
\begin{aligned}
\operatorname{gen}(h_S) \leqslant \frac{K}{\sqrt{M}} + 3K \sqrt{\frac{\log \frac{2}{\delta} }{2M}}.
\end{aligned}
$$

\noindent\textbf{\textit{Numerical Experiments}.--} To verify the effectiveness of our generalization upper bound, we conduct quantum phase classification on the axial next-nearest-neighbor Ising (ANNNI) model~\cite{wang2024supervised}. (see SM.~\ref{asec:experiments_details} for more details.)

\begin{figure}[htpb]
  \includegraphics[width=0.48\textwidth]{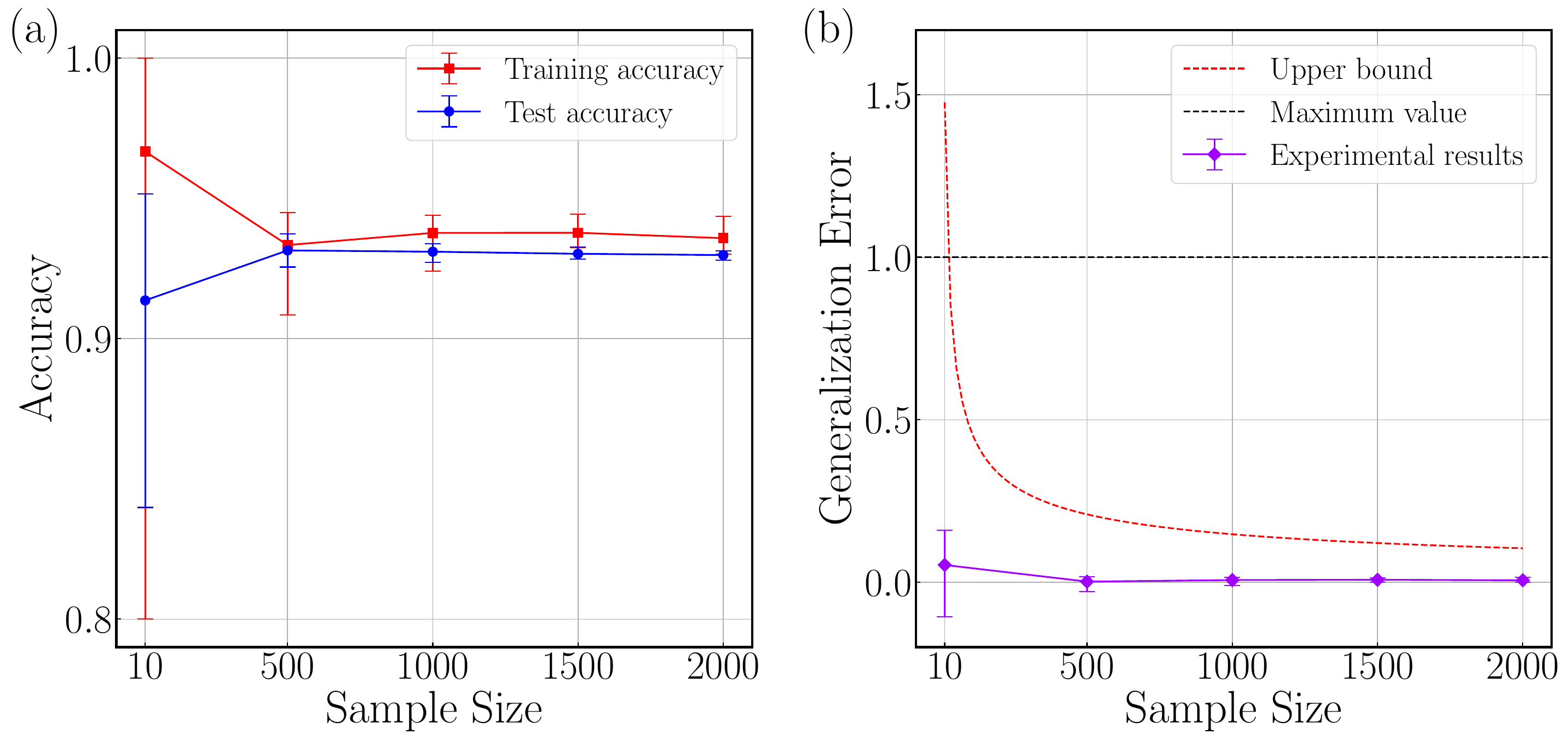}
  \caption{(a) Training accuracy and test accuracy under different sample sizes; (b) Comparison between experimental generalization error and theoretical generalization upper bound with confidence  $1-\delta = 0.9$. The maximum possible generalization error is 1. The error bars represent the minimum and maximum values across 10 independent runs with different training sets, with the central line showing the mean value.}
  \label{fig:gen_bound_classification}
\end{figure}

We consider an ANNNI model with $N$ qubits, whose Hamiltonian is given by $$H=-\left(\sum_{i=1}^{N-1} X_{i} X_{i+1}-\kappa \sum_{i=1}^{N-2} X_{i} X_{i+2}+h \sum_{i=1}^{N} Z_{i}\right).$$ The ground state of this Hamiltonian can exist in either ordered phase or disordered phase depending on the values of $\kappa$ and $h$ and two boundary lines $h_I(\kappa) \approx  \frac{1-\kappa}{\kappa}\left(1-\sqrt{\frac{1-3 \kappa+4 \kappa^{2}}{1-\kappa}}\right)$ and $h_{C}(\kappa) \approx 1.05 \sqrt{(\kappa-0.5)(\kappa-0.1)}$, which serve as the labels for the ground states. The 6-qubit phase diagram is shown in Fig.~\ref{fig:experiments}.(a), and we use the quantum circuit shown in Fig.~\ref{fig:experiments}.(b) for our experiments, where $L = 20$. Since the probabilistic nature of the generalization bound stems from the sampling of training set $S$, we select different training sets while keeping the same test set (with 10,000 test samples to ensure test error approximates prediction error). The training and test results, along with the generalization bound as a function of sample size, are shown in Fig.~\ref{fig:gen_bound_classification}.

In this experiment, we choose the per-sample risk function as the 0-1 risk function. The training accuracy thus is given by $1 - \hat{R}_S(h_S)$. In this scenario, our proposed generalization bound is  shown in Eq.~\eqref{eq:classification_bound}. Therefore, in Fig.~\ref{fig:gen_bound_classification}, we present the training and test results in terms of accuracy and calculate the generalization error based on these results. We can observe that our generalization bound is below the maximum possible generalization error of 1, and as the sample size increases, both the generalization upper bound and experimental results approach 0, demonstrating the tightness of our generalization bound.
\begin{figure}[htpb]
  \centering
  \includegraphics[width=0.48\textwidth]{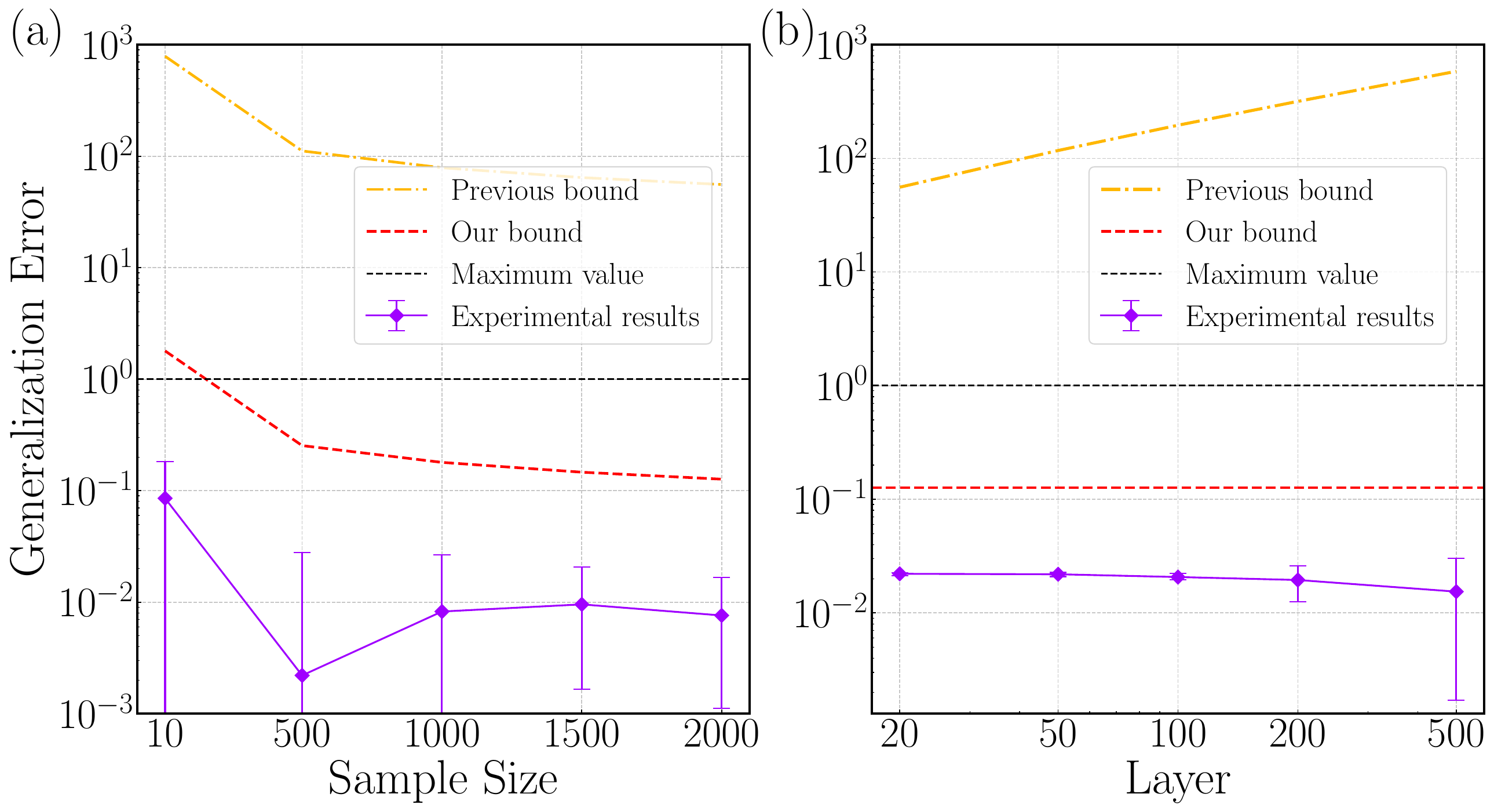}
  \caption{(a) Comparison between our theoretical generalization upper bound and previous work~\cite{caro2022generalization}. Both bounds are shown with confidence $1-\delta = 0.9$. Some minimal experimental results with negative generalization error are not displayed due to the logarithmic scale.  (b) Comparison of theoretical generalization error upper bounds under different model complexities, where layers represent complexity.  The error bars represent the minimum and maximum values across 10 independent runs with different training sets or random seeds, with the central line showing the mean value. }
  \label{fig:gen_bound_compare}
\end{figure}

An intuitive notion suggests that the generalization bound of QML models should increase with the number of parameterized quantum gates. Paper~\cite{caro2022generalization} also derived a generalization bound that increases with the number of parameterized quantum gates (see SM.~\ref{asec:recap}). However, comparing the experimental generalization error from Fig.~\ref{fig:gen_bound_classification}.(b), our theoretical bound, and their results in Fig.~\ref{fig:gen_bound_compare}.(a), we observe that even when the sample size $M$ increases to 2000, the generalization upper bound from~\cite{caro2022generalization} remains larger than the maximum possible generalization error of 1, meaning their bound provides no useful information. As we  increase the number of layers in the QML model (in layers $\{20,50,100,200,500\}$, the corresponding number of parameterized quantum gates is $\{360,900,1800,3600,9000\}$),  as shown in Fig.~\ref{fig:gen_bound_compare}.(b), we find that the experimental generalization error changes little, always staying below our theoretical bound, and potentially achieving even smaller generalization error, indicating better generalization capability.

Furthermore, the work~\cite{gil2024understanding} points out that when labels are completely random (i.e., labels are independent of data), QML can still train well but fail in prediction, suggesting that any uniform generalization bound may not be the correct way to measure generalization. However, we note that their experimental results are based on extremely small datasets, while generalization bounds are only meaningful when sufficient data is available. Since the labels and data are completely random, the prediction error remains at the random guessing level (around 0.5). However, because the labels are random, the expectations for different classes should be the same. Paper~\cite{wang2025limitations} shows that in this case, as the amount of data increases, the training error gradually increases. Therefore, as the amount of data increases, the training error gradually increases, ultimately leading to a small generalization error, as shown in Fig.~\ref{fig:random_label}.

\begin{figure}[htpb]
  \centering
  \includegraphics[width=0.48\textwidth]{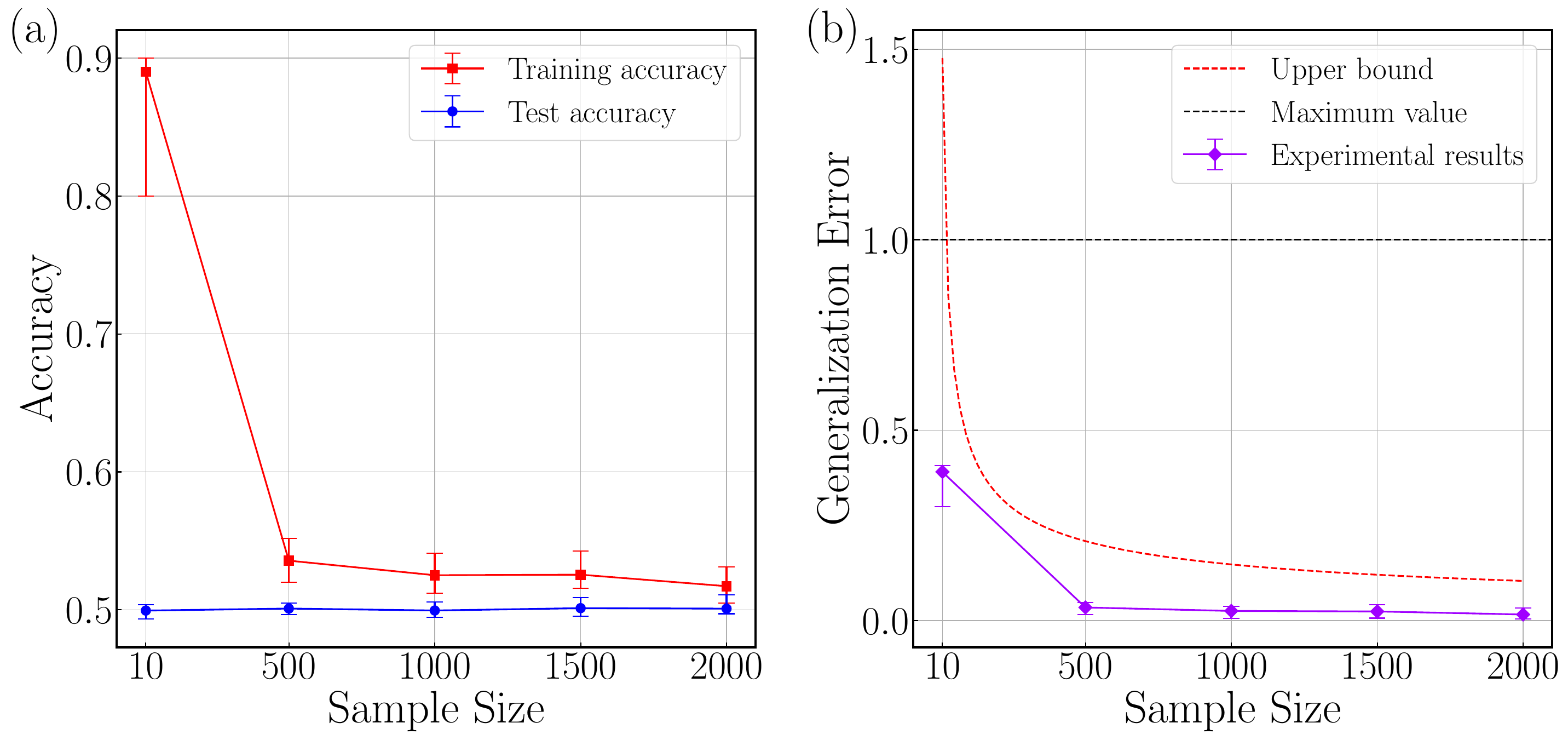}
  \caption{(a) Training accuracy and test accuracy under random labels. (b) Comparison between experimental generalization error and our generalization upper bound under random labels. The error bars represent the minimum and maximum values across 10 independent runs with different training sets, with the central line showing the mean value.}
  \label{fig:random_label}
\end{figure}

Besides, some works also consider that encoding methods~\cite{caro2021encodingdependent} and optimization approaches~\cite{yang2025stability} can affect generalization error, and derive generalization bounds, but they all involve big-O notation. We also validate this generalization upper bound on regression tasks and demonstrate that data dimension, number of qubits, and encoding methods have little impact on generalization error, as detailed in SM.~\ref{asec:regression_experiments}. We further verify that the choice of training hyperparameters, batch size, epochs, learning rate, and optimizer also have minimal effect on generalization error, as shown in SM.~\ref{asec:optimization}.

\noindent\textbf{\textit{Discussion}.-- }In this paper, we derived tight generalization bounds for QML, challenging the widely held view that generalization capability decreases as the number of parameterized quantum gates increases, and demonstrated that our bound remains valid even in random label scenarios. We emphasize that meaningful generalization bounds need not involve large constants or obscure significant variables within big-O notation, deriving truly meaningful bounds helps us better understand QML. 

Additionally, this work considers only the case where encoding gates and parameterized gates are separated. Another QML paradigm, data re-uploading, where the data encoding and variational parameterized quantum circuits are interleaved~\cite{perez2020data}, is not applicable to our analysis. Moreover, recent work has shown that this paradigm, when using deep circuits to process high-dimensional data, always approaches random guessing regardless of training quality, losing the good generalization properties of QML~\cite{wangpredictive}.

\let\oldaddcontentsline\addcontentsline
\renewcommand{\addcontentsline}[3]{}
\bibliography{ref}
\let\addcontentsline\oldaddcontentsline

\onecolumngrid

\pagebreak

\part*{Supplementary Material}

\appendix

\phantomsection
\tableofcontents

\resetAppendixCounters{A}

\section{Pauli Basis Representation}
\label{asec:pauli_basis}

Consider the density matrix $\rho$ of an $N$-qubit quantum state, which can be decomposed in the Pauli basis as:
\begin{equation}
  \label{aeq:pauli}
\begin{aligned}
\rho = \frac{1}{2^{N}} \left( \sum_{P_i \in \{I,Z,Y,X\}^{\otimes N}}^{} \alpha_i P_i \right) ,
\end{aligned}
\end{equation}
where $\alpha_i = \operatorname{Tr}\left[\rho P_i\right]$ is the coefficient of the Pauli basis $P_i$. Since $P_i$ is a Hermitian matrix, $\alpha_i$ must be a real number. The summation above contains $4^N$ terms. Therefore, the quantum state $\rho$ can be represented as a coefficient vector $\boldsymbol{\alpha} = \begin{bmatrix} 
    \alpha_1 & \cdots & \alpha_{4^N} 
\end{bmatrix}^{\top}$ determined by $4^N$ Pauli coefficients $\alpha_i$. Throughout the analysis, we adopt the Pauli basis ordering: $\{I,Z,X,Y\}$ for single-qubit systems and its $N$-fold tensor product $\{I,Z,X,Y\}^{\otimes N}$ for $N$-qubit systems.



For the general observable $O$, it can also be decomposed in unnormalized the Pauli basis as:
\begin{equation}
  \label{aeq:O_decomposition}
\begin{aligned}
  O = B_O \left(  \sum_{P_i \in \{I,Z,Y,X\}^{\otimes N}}^{} m_i P_i \right) ,
\end{aligned}
\end{equation}
where $B_O = \|O\|_{2}$ is the spectral norm of the observable $O$ and coefficient $m_i = \operatorname{Tr}\left[O P_i\right] /B_O$. The vector $\boldsymbol{m} = [m_1, \cdots, m_{4^N}]^{\top}$ represents the coefficient vector of the observable $O$ in the Pauli basis.

After converting both the quantum state and observable into their equivalent Pauli basis coefficient vectors, the expectation value $\operatorname{Tr}\left[O\rho\right]$ can then be expressed as the dot product of the observable and quantum state coefficient vectors, as demonstrated in the following theorem.

\begin{theorem}
  \label{athm:pauli_basis_representation}
  Let $\boldsymbol{\alpha}$ be the Pauli basis coefficient vector corresponding to an $N$-qubit quantum state $\rho$ defined in Eq.~\eqref{aeq:pauli}, and $\boldsymbol{m}$ be the Pauli coefficient vector corresponding to the observable $O $ with spectral norm $B_O$ defined in Eq.~\eqref{aeq:O_decomposition}. Then the measurement result of the quantum state $\rho$ with respect to the observable $O$, $\operatorname{Tr}\left[O \rho\right]$, can be expressed as:
  $$
  \begin{aligned}
  \operatorname{Tr}\left[O \rho\right] =  B_O \cdot \boldsymbol{m}^{\top} \boldsymbol{\alpha}.
  \end{aligned}
  $$
\end{theorem}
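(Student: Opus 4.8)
The plan is to substitute the two Pauli-basis decompositions directly into $\operatorname{Tr}[O\rho]$ and then collapse the resulting double sum using the orthogonality of the Pauli strings under the Hilbert--Schmidt inner product. Writing $O = B_O \sum_i m_i P_i$ from Eq.~\eqref{aeq:O_decomposition} and $\rho = \frac{1}{2^N}\sum_j \alpha_j P_j$ from Eq.~\eqref{aeq:pauli}, the linearity of the trace immediately yields
\begin{equation*}
  \operatorname{Tr}\left[O \rho\right] = \frac{B_O}{2^N} \sum_{i,j} m_i\, \alpha_j \operatorname{Tr}\left[P_i P_j\right].
\end{equation*}
Everything then hinges on a single identity, which is the real workhorse of the proof: the orthogonality relation $\operatorname{Tr}[P_i P_j] = 2^N \delta_{ij}$ for any two elements of the $N$-qubit Pauli basis $\{I,Z,X,Y\}^{\otimes N}$.

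The key step I would carry out first, therefore, is to establish this orthogonality relation. For a single qubit one checks directly that $\operatorname{Tr}[\sigma_a \sigma_b] = 2\,\delta_{ab}$ over $\sigma \in \{I,Z,X,Y\}$: the identity gives $\operatorname{Tr}[I]=2$, each Pauli squares to $I$ so $\operatorname{Tr}[\sigma_a^2]=2$, and any product of two distinct Pauli matrices is (up to a phase) the third traceless Pauli, while a Pauli times $I$ is traceless. Lifting to $N$ qubits, I would use that a Pauli string factorizes as $P_i = \bigotimes_{k=1}^N \sigma_{a_k}$ together with the multiplicativity of the trace over tensor products, $\operatorname{Tr}[P_i P_j] = \prod_{k=1}^N \operatorname{Tr}[\sigma_{a_k}\sigma_{b_k}]$, so the product equals $\prod_k 2\,\delta_{a_k b_k} = 2^N \delta_{ij}$, the joint Kronecker delta being nonzero exactly when all single-qubit factors match.

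With this identity in hand, the remaining steps are routine: substituting $\operatorname{Tr}[P_i P_j] = 2^N \delta_{ij}$ into the double sum eliminates the factor $2^N$ and forces $j = i$, collapsing it to $\operatorname{Tr}[O\rho] = B_O \sum_i m_i \alpha_i$, which is precisely the dot product $B_O\cdot \boldsymbol{m}^\top \boldsymbol{\alpha}$. I do not expect any genuine obstacle here; the only point requiring care is the consistent indexing of the two decompositions so that the same ordering of the $4^N$ Pauli strings labels the entries of $\boldsymbol{m}$ and $\boldsymbol{\alpha}$, which is guaranteed by the fixed basis ordering $\{I,Z,X,Y\}^{\otimes N}$ adopted in this section. (As a sanity check, one may verify consistency of the normalization conventions: taking $O = P_i$ gives $m_i = \operatorname{Tr}[P_i P_i]/B_O = 2^N/B_O$, and the formula reproduces $\operatorname{Tr}[P_i \rho] = \alpha_i$, matching the definition $\alpha_i = \operatorname{Tr}[\rho P_i]$.)
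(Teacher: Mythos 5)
Your proof is correct and takes essentially the same route as the paper: substitute both Pauli-basis decompositions into the trace and collapse the double sum using $\operatorname{Tr}[P_i P_j] = 2^N \delta_{ij}$, with the only difference being that you establish this orthogonality relation explicitly (single-qubit traces plus tensor-product multiplicativity) where the paper simply cites it as a known property. One caveat: your parenthetical sanity check does not actually pass as stated — it inadvertently exposes a normalization slip in the paper's coefficient formula, since $m_i = \operatorname{Tr}[O P_i]/B_O$ is inconsistent with the decomposition $O = B_O \sum_i m_i P_i$ by a factor of $2^N$ (the consistent extraction is $m_i = \operatorname{Tr}[O P_i]/(2^N B_O)$), so plugging $m_i = 2^N/B_O$ into the theorem would yield $2^N \alpha_i$ rather than $\alpha_i$; your main argument is unaffected because it uses only the decomposition itself, never the extraction formula.
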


\begin{proof}
  Since
  $$
  \begin{aligned}
  \rho =  \frac{1}{2^{N}} \left( \sum_{P_i \in \{I,Z,Y,X\}^{\otimes N}}^{} \alpha_i P_i \right), O = B_O \left(  \sum_{P_i \in \{I,Z,Y,X\}^{\otimes N}}^{} m_i P_i \right) ,
  \end{aligned}
  $$
  we have 
  $$
  \begin{aligned}
  \operatorname{Tr}\left[O \rho\right] &=\frac{B_O}{2^{N}} \operatorname{Tr}\left[ \left( \sum_{P_i \in \{I,Z,Y,X\}^{\otimes N}}^{} m_i P_i \right) \cdot \left(  \sum_{P_j \in \{I,Z,Y,X\}^{\otimes N}}^{} m_j P_j \right)  \right] \\
  &= \frac{B_O}{2^{N}} \left( \sum_{P_i \in \{I,Z,Y,X\}^{\otimes N}} \alpha_i m_i \operatorname{Tr}\left[P_i^2\right] + \sum_{P_k \in \{I,Z,Y,X\}^{\otimes N},P_k \neq P_j} \alpha_k m_j \operatorname{Tr}\left[P_k P_j\right] \right)  \\
  &= B_O\left( \sum_{i=1}^{4^{N}} a_i m_i \right)  \\
  &= B_O \cdot \boldsymbol{m}^{\top} \boldsymbol{\alpha}, \\
  \end{aligned}
  $$
  where we used the properties of Pauli bases: $\operatorname{Tr}\left[ P_i^2 \right] = 2^{N}$ and  $\operatorname{Tr}\left[P_i P_j \right] = 0, \forall i \neq j$. Thus, $  \operatorname{Tr}\left[O \rho\right] =  B_O \cdot \boldsymbol{m}^{\top}  \boldsymbol{\alpha}.$
\end{proof}

In quantum machine learning, we generally consider an observable as a single Pauli string, meaning the observable can be expressed as $O = B_O \cdot m_j P_j$ for $j \in [1:4^{N}]$, which represents any integer from $1$ to $4^{N}$. Therefore, for such an observable, the corresponding Pauli coefficient vector can be written as:
\begin{equation}
\label{eq:Om}
\begin{aligned}
\boldsymbol{m} = \begin{bmatrix} 
     0 & \cdots & \underset{j\text{-th}}{1} & \cdots & 0
\end{bmatrix}^{\top} .
\end{aligned}
\end{equation}
And we have the following corollary:

\begin{corollary}
  \label{acor:measure_properties}
  Consider an observable as a single Pauli string  $O = B_O \cdot m_j P_j$ with spectral norm $B_O$, whose coefficient vector $\boldsymbol{m}$ is defined as in Eq.~\eqref{eq:Om}. For any quantum state $\rho$ with corresponding Pauli basis coefficient vector $\boldsymbol{\alpha}$, we have
  $$
  \begin{aligned}
  |\boldsymbol{m}^{\top} \boldsymbol{\alpha}| \leqslant 1.
  \end{aligned}
  $$
\end{corollary}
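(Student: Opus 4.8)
The plan is to reduce the claimed inequality to the elementary fact that the expectation value of a bounded Hermitian observable in any quantum state cannot exceed the observable's spectral norm. First I would unpack the left-hand side. Since $\boldsymbol{m}$ is the standard basis vector with its single nonzero entry equal to $1$ in the $j$-th slot (Eq.~\eqref{eq:Om}), the dot product collapses to one coordinate, $\boldsymbol{m}^{\top} \boldsymbol{\alpha} = \alpha_j$. By the definition of the Pauli coefficient vector in Eq.~\eqref{aeq:pauli}, this coordinate is exactly $\alpha_j = \operatorname{Tr}[\rho P_j]$, the expectation value of the Pauli string $P_j$ in the state $\rho$. So the entire statement is the assertion $|\operatorname{Tr}[\rho P_j]| \leqslant 1$.

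Next I would record the two structural facts that drive the bound. A Pauli string $P_j \in \{I,Z,X,Y\}^{\otimes N}$ is Hermitian and unitary, hence squares to the identity, so all its eigenvalues lie in $\{-1,+1\}$ and its spectral norm is $\|P_j\|_2 = 1$. Writing the spectral decomposition $\rho = \sum_k p_k \ket{\phi_k}\bra{\phi_k}$ with $p_k \geqslant 0$, $\sum_k p_k = 1$, and unit vectors $\ket{\phi_k}$, the expectation value $\operatorname{Tr}[\rho P_j] = \sum_k p_k \bra{\phi_k} P_j \ket{\phi_k}$ becomes a convex combination of diagonal matrix elements, each of which satisfies $|\bra{\phi_k} P_j \ket{\phi_k}| \leqslant \|P_j\|_2 = 1$. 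Combining these, $|\boldsymbol{m}^{\top} \boldsymbol{\alpha}| = |\operatorname{Tr}[\rho P_j]| \leqslant \sum_k p_k \,|\bra{\phi_k} P_j \ket{\phi_k}| \leqslant 1$, which is the claim.

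As an alternative route I could invoke Theorem~\ref{athm:pauli_basis_representation} directly: for $O = B_O P_j$ it gives $\operatorname{Tr}[O\rho] = B_O \cdot \boldsymbol{m}^{\top} \boldsymbol{\alpha}$, and since $|\operatorname{Tr}[O\rho]| \leqslant \|O\|_2 = B_O$ by the same convex-combination argument, dividing by $B_O > 0$ yields $|\boldsymbol{m}^{\top}\boldsymbol{\alpha}| \leqslant 1$. Either way the derivation is essentially immediate, so I do not anticipate a serious obstacle. The only point requiring a word of care is justifying that $|\operatorname{Tr}[\rho A]| \leqslant \|A\|_2$ for a Hermitian $A$ — that is, that the expectation is trapped in $[-\|A\|_2,\|A\|_2]$ because $\rho$ is a probability-weighted mixture of rank-one projectors — together with the observation that every Pauli string has unit spectral norm. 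Both are standard, but spelling them out is precisely what turns the one-line identity $\boldsymbol{m}^{\top}\boldsymbol{\alpha} = \operatorname{Tr}[\rho P_j]$ into the stated bound.
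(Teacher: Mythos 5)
Your proof is correct and follows essentially the same route as the paper's: identify $\boldsymbol{m}^{\top}\boldsymbol{\alpha}$ with the expectation value of a unit-spectral-norm Pauli string and bound it by the spectral norm --- indeed your ``alternative route'' is verbatim the paper's proof via Theorem~\ref{athm:pauli_basis_representation}. The only difference is one of detail: your primary route works directly from the definition $\alpha_j = \operatorname{Tr}[\rho P_j]$ and proves the bound $|\operatorname{Tr}[\rho A]| \leqslant \|A\|_{2}$ through the spectral decomposition of $\rho$, whereas the paper simply cites this fact as known.
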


\begin{proof}
  Since $O = B_O \cdot m_j P_j$, according to Theorem~\ref{athm:pauli_basis_representation}, we have $\operatorname{Tr}\left[O \rho\right] = B_O \cdot m_j \alpha_j$. Then $|\boldsymbol{m}^{\top} \boldsymbol{\alpha}| = |m_j \alpha_j| = \frac{1}{B_O}|\operatorname{Tr}\left[O \rho\right]| \leqslant \frac{1}{B_O} \cdot B_O = 1$, where we used the fact that the expectation value of any observable is bounded by its spectral norm. Thus, $|\boldsymbol{m}^{\top} \boldsymbol{\alpha}| \leqslant 1$.
\end{proof}

Next, we introduce the properties of quantum circuits in the Pauli basis:
\begin{lemma}[Theorem C.1 in Ref.~\cite{wangpredictive}]
  For an $N$-qubit parameterized quantum circuit $\mathcal{U}_{\theta}$, the corresponding Pauli basis transfer matrix $T(\boldsymbol{\theta})$ is orthogonal.
\end{lemma}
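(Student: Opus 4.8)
The plan is to prove orthogonality directly, by showing $T(\boldsymbol{\theta})^\top T(\boldsymbol{\theta}) = I$, and to do so by exploiting the fact that the Pauli-coefficient representation is — up to an overall factor of $2^N$ — an isometry between the real vector space of Hermitian operators with the Hilbert--Schmidt inner product and $\mathbb{R}^{4^N}$ with its standard inner product, combined with the elementary fact that unitary conjugation preserves the Hilbert--Schmidt inner product. First I would read off the explicit entries of the transfer matrix. Writing the evolved state as $\mathcal{U}_{\boldsymbol{\theta}}(\rho) = U(\boldsymbol{\theta}) \rho U(\boldsymbol{\theta})^\dagger$ and its Pauli coefficients as $\beta_j = \operatorname{Tr}[U \rho U^\dagger P_j]$, linearity in $\boldsymbol{\alpha}$ gives
\[
  T_{ji}(\boldsymbol{\theta}) = \frac{1}{2^N}\operatorname{Tr}\!\left[P_j\, U P_i U^\dagger\right].
\]

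The key step is to establish a scaled-isometry identity for the coefficient map. For any two Hermitian operators $A = \frac{1}{2^N}\sum_i a_i P_i$ and $B = \frac{1}{2^N}\sum_i b_i P_i$, the orthogonality relations $\operatorname{Tr}[P_i P_j] = 2^N \delta_{ij}$ already used in Theorem~\ref{athm:pauli_basis_representation} yield $\langle \boldsymbol{a}, \boldsymbol{b}\rangle = 2^N \operatorname{Tr}[AB]$, where $\boldsymbol{a}, \boldsymbol{b}$ are the corresponding Pauli coefficient vectors. Applying the transfer map corresponds to conjugating by $U$, and since $\operatorname{Tr}[(U A U^\dagger)(U B U^\dagger)] = \operatorname{Tr}[AB]$ by cyclicity together with $U^\dagger U = I$, the images satisfy $\langle T\boldsymbol{a}, T\boldsymbol{b}\rangle = 2^N\operatorname{Tr}[U A U^\dagger\, U B U^\dagger] = 2^N\operatorname{Tr}[AB] = \langle \boldsymbol{a}, \boldsymbol{b}\rangle$ for \emph{all} $\boldsymbol{a}, \boldsymbol{b} \in \mathbb{R}^{4^N}$. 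A real linear map preserving the inner product on the whole space is orthogonal, hence $T(\boldsymbol{\theta})^\top T(\boldsymbol{\theta}) = I$.

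The main obstacle — and the reason I would not simply invoke the purity-preservation argument sketched in the main text — is that the norm preservation $\|\boldsymbol{\beta}\|_2 = \|\boldsymbol{\alpha}\|_2$ holds a priori only for $\boldsymbol{\alpha}$ arising from physical (pure) states, and these vectors do not fill out a sphere in $\mathbb{R}^{4^N}$, so one cannot immediately deduce $T^\top T = I$ from norm preservation on such a restricted set. Passing to the full Hermitian operator space (equivalently, arbitrary real vectors $\boldsymbol{a}$, not just physical ones) and using the bilinear inner product rather than just the norm closes this gap, because both the Pauli map and conjugation by $U$ are linear and well defined on the entire space. An alternative, equally rigorous route avoids the subtlety by brute force: substitute the explicit $T_{ji}$ into $\sum_j T_{ji} T_{jk}$ and collapse the sum with the Pauli completeness (SWAP) relation $\sum_j \operatorname{Tr}[A P_j]\operatorname{Tr}[B P_j] = 2^N \operatorname{Tr}[AB]$, obtaining $\tfrac{1}{4^N}\cdot 4^N \delta_{ik} = \delta_{ik}$; here the only care required is justifying the completeness relation itself.
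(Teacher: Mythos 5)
Your proof is correct, and it is genuinely different from what the paper offers: the paper does not prove this lemma at all, but imports it as Theorem~C.1 of Ref.~\cite{wangpredictive}, and the only argument appearing in the text is the purity-preservation sketch that you explicitly decline to use. The gap you identify in that sketch is real. Norm preservation $\|\boldsymbol{\beta}\|_2 = \|\boldsymbol{\alpha}\|_2$ is justified there only for coefficient vectors of physical (pure) states, and although those vectors linearly span $\mathbb{R}^{4^N}$, norm preservation on a spanning set does not imply orthogonality of a linear map: one needs preservation of the bilinear form, which by polarization requires controlling $\|T(\boldsymbol{a}\pm\boldsymbol{b})\|_2$, and $\boldsymbol{a}\pm\boldsymbol{b}$ leaves the pure-state set. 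So the main text's assertion that the equality ``holds for all $\boldsymbol{\alpha}$'' is exactly the step that needs your argument. Your fix is the standard rigorous one: the coefficient map is a $2^N$-scaled isometry between Hermitian operators under the Hilbert--Schmidt inner product and $\mathbb{R}^{4^N}$ (via $\langle \boldsymbol{a},\boldsymbol{b}\rangle = 2^N\operatorname{Tr}[AB]$, which follows from the same orthogonality relations $\operatorname{Tr}[P_iP_j]=2^N\delta_{ij}$ the paper already uses), conjugation by $U$ preserves $\operatorname{Tr}[AB]$ on the \emph{whole} operator space, and a linear map preserving the inner product everywhere satisfies $T^\top T = I$. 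Your explicit entry formula $T_{ji} = 2^{-N}\operatorname{Tr}[P_j U P_i U^\dagger]$ and the alternative collapse of $\sum_j T_{ji}T_{jk}$ via the completeness relation $\sum_j \operatorname{Tr}[AP_j]\operatorname{Tr}[BP_j] = 2^N\operatorname{Tr}[AB]$ (with $A = UP_iU^\dagger$, $B = UP_kU^\dagger$) both check out. What each approach buys: the paper's sketch is short and physically suggestive but incomplete as written and leans on an external citation; your argument is self-contained, closes the logical gap, and as a bonus shows $T$ is orthogonal as a map on all of $\mathbb{R}^{4^N}$, not merely on the physically realizable vectors.
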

This indicates that the transfer matrix $T(\boldsymbol{\theta})$ of  parameterized quantum circuit $\mathcal{U}_{\boldsymbol{\theta}}(\rho) = U(\boldsymbol{\theta}) \rho U(\boldsymbol{\theta})^{\dagger}$ used in quantum machine learning models has the following properties in the Pauli basis: $T(\boldsymbol{\theta})^{\top} T(\boldsymbol{\theta}) = I$ and its spectral norm $\|T(\boldsymbol{\theta})\|_{2} = 1$. It is worth noting that $T(\boldsymbol{\theta})$ contains two aspects of information about the parameterized quantum circuit: the circuit architecture is primarily reflected in the matrix form $T$, while the circuit parameters are embodied in $\boldsymbol{\theta}$. Moreover, while the Pauli basis transfer matrix corresponding to a parameterized quantum circuit is orthogonal, this does not imply that all orthogonal matrices in Pauli basis can correspond to parameterized quantum circuits.

In summary, for a quantum state $\rho$, the output of a quantum machine learning model implemented by a parameterized quantum circuit $\mathcal{U}_{\boldsymbol{\theta}}$ and measured by a fixed single Pauli string observable $O$ can be represented as:

\begin{equation*}
\begin{aligned}
  h(\rho,\boldsymbol{\theta}) = h(\boldsymbol{\alpha},\boldsymbol{\theta}) = \operatorname{Tr}\left[O \mathcal{U}_{\boldsymbol{\theta}}( \rho)\right] = B_O \cdot \boldsymbol{m}^{\top} T(\boldsymbol{\theta} ) \boldsymbol{\alpha} = B_O \cdot \boldsymbol{w}^{\top}(T,\boldsymbol{\theta}) \boldsymbol{\alpha} .\\
\end{aligned}
\end{equation*}
Here, for a fixed observable coefficient vector $\boldsymbol{m}$, the parameter vector of the quantum machine learning model $\boldsymbol{w}(T,\boldsymbol{\theta}) = T ^{\top}(\boldsymbol{\theta}) \boldsymbol{m}$ is a function of the parameterized quantum circuit architecture $T$ and circuit parameters $\boldsymbol{\theta}$, with $\|\boldsymbol{w}(T,\boldsymbol{\theta})\|_{2}^2 = \boldsymbol{m}^{\top} T (\boldsymbol{\theta}) T ^{\top}(\boldsymbol{\theta}) \boldsymbol{m} = \boldsymbol{m}^{\top} \boldsymbol{m} = 1$. Moreover, according to Corollary~\ref{acor:measure_properties}, $|\boldsymbol{w}^{\top}(T,\boldsymbol{\theta}) \boldsymbol{\alpha}| \leqslant 1$.  Therefore, when $B_O = 1$, for input $\boldsymbol{x} \in \mathcal{X}$, the hypothesis set $\mathcal{H}_Q$ generated by the quantum machine learning model is:
\begin{equation}
  \label{eq:HQ}
\begin{aligned}
  \mathcal{H}_Q = \{h(\boldsymbol{x}) =  \boldsymbol{w}^{\top} \boldsymbol{x}: \|\boldsymbol{w}\|_{2} = 1,|\boldsymbol{w}^{\top} \boldsymbol{x}| \leqslant 1 \  \forall  \  \boldsymbol{x} \in \mathcal{X},  \boldsymbol{w} \text{ satisfies certain special structure}\},
\end{aligned}
\end{equation}
where special structure means that $\boldsymbol{w} = T ^{\top} \boldsymbol{m}$ and the transfer matrix $T$ in Pauli basis corresponds to a quantum circuit. Clearly, the hypothesis set generated by the quantum machine learning model is a subset of the following hypothesis set $\mathcal{H}$:
\begin{equation}
  \label{eq:HH}
\begin{aligned}
  \mathcal{H} = \{h(\boldsymbol{x}) = \boldsymbol{w}^{\top} \boldsymbol{x} : \|\boldsymbol{w}\|_{2} = 1,|\boldsymbol{w}^{\top} \boldsymbol{x}| \leqslant 1 \  \forall  \  \boldsymbol{x} \in \mathcal{X}\},
\end{aligned}
\end{equation}
namely, $\mathcal{H}_Q \subseteq \mathcal{H}$.

\resetAppendixCounters{B}

\section{Generalization Bound for QML}
\label{asec:generalization_bound}
We first introduce Rademacher complexity, then present generalization error bounds based on Rademacher complexity, and finally derive generalization bounds for quantum machine learning. 

Let us introduce some notation. Let $z = (x,y) \in \mathcal{Z}$ represent a sample containing both data and label, where the sample space $\mathcal{Z} = \mathcal{X} \times \mathcal{Y}$ consists of input space $\mathcal{X}$ and label space $\mathcal{Y}$. The hypothesis function $h \in \mathcal{H}$ generated by the machine learning model maps from input space to label space, i.e., $h: \mathcal{X} \rightarrow \mathcal{Y}$, where $\mathcal{H}$ is the hypothesis space. The per-sample risk function is $r: \mathcal{Y} \times \mathcal{Y} \rightarrow \mathbb{R}$. We define the composite function $g = r \circ h: \mathcal{Z} \to \mathbb{R}$ as $g(z) = g((x,y)) = r(h(x),y) = r(y',y)$, where $y' = h(x)$ is the predicted label. Based on this notation, we define the training error as:
$$
\begin{aligned}
\widehat{R}_S(h) = \frac{1}{M} \sum_{m=1}^{M}r(h(x^{(m)}),y^{(m)}) =\frac{1}{M} \sum_{m=1}^{M} g((x^{(m)},y^{(m)})) = \frac{1}{M} \sum_{m=1}^{M}g(z^{(m)}),
\end{aligned}
$$
similarly, we define the prediction error as:
$$
\begin{aligned}
R(h) = \underset{(x,y) \sim \mathcal{D}}{\mathbb{E}}[r(h(x),y)] = \underset{(x,y) \sim \mathcal{D}}{\mathbb{E}}[g((x,y))] = \underset{z \sim \mathcal{D}}{\mathbb{E}}[g(z)].
\end{aligned}
$$

\begin{definition}[Empirical Rademacher Complexity]
    Let $\mathcal{G}$ be a family of functions mapping from sample space $Z = \mathcal{X} \times \mathcal{Y}$ to $[a, b]$, and $S=\left(z^{(1)}, \cdots, z^{(M)}\right)$ be a fixed dataset of size $M$, where $\mathcal{X}$ is the input space, $\mathcal{Y}$ is the label space, and $z^{(m)} = (x^{(m)},y^{(m)})$. The empirical Rademacher complexity of $\mathcal{G}$ with respect to dataset $S$ is defined as:
  $$
  \begin{aligned}
  \widehat{\mathfrak{R}}_{S}(\mathcal{G})=\underset{\boldsymbol{\sigma}}{\mathbb{E}}\left[\sup _{g \in \mathcal{G}} \frac{1}{M} \sum_{m=1}^{M} \sigma_{m} g\left(z^{(m)}\right)\right],
  \end{aligned}
  $$
where $\boldsymbol{\sigma}=\left[\sigma_{1}, \ldots, \sigma_{M}\right]^{\top}$, and $ \sigma_{m} $ is independent uniform random variables taking values in \( \{-1,+1\} \). These random variables \( \sigma_{m} \) are called Rademacher variables.
\end{definition}

\begin{lemma}[Theorem 3.3 in Ref.~\cite{mohri2018foundations}]
  \label{lem:generalization_Rademacher}
    Let $ \mathcal{G} $ be a family of functions mapping from $ \mathcal{Z} $ to $ [0,C] $. Then, for any $ \delta>0 $, with probability at least $ 1-\delta $ over the draw of an i.i.d. sample $ S $ of size $ M $, the following inequality holds for all $ g \in \mathcal{G} $:
  \begin{equation*}
  \begin{aligned}
    \mathbb{E}[g(z)] \leqslant \frac{1}{M} \sum_{m=1}^{M} g\left(z^{(m)}\right)+2 \widehat{\mathfrak{R}}_{M}(\mathcal{G})+3C\sqrt{\frac{\log \frac{2}{\delta}}{2 M}}.
  \end{aligned}
  \end{equation*}
\end{lemma}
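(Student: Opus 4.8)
The plan is to control, uniformly over \(\mathcal{G}\), the gap between the true mean and the empirical mean, and then to invoke concentration twice. Define the supremum-deviation functional
\[
\Phi(S) = \sup_{g \in \mathcal{G}} \left( \mathbb{E}[g(z)] - \frac{1}{M} \sum_{m=1}^{M} g(z^{(m)}) \right).
\]
Since \(\mathbb{E}[g(z)] - \frac{1}{M}\sum_{m=1}^{M} g(z^{(m)}) \leqslant \Phi(S)\) holds for every \(g \in \mathcal{G}\) by definition of the supremum, it suffices to establish that \(\Phi(S) \leqslant 2\,\widehat{\mathfrak{R}}_{S}(\mathcal{G}) + 3C\sqrt{\log(2/\delta)/(2M)}\) with probability at least \(1-\delta\); the uniform-over-\(g\) claim follows immediately.

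First I would show that \(\Phi\) has bounded differences. Replacing a single sample point \(z^{(i)}\) by any \(z'^{(i)}\) alters the empirical average of each \(g \in \mathcal{G}\) by at most \(C/M\), because \(g\) takes values in \([0,C]\); since a supremum of functions each changing by at most \(C/M\) itself changes by at most \(C/M\), the functional \(\Phi\) satisfies the bounded-differences property with constant \(C/M\) in each coordinate. McDiarmid's inequality then yields, with probability at least \(1-\delta/2\),
\[
\Phi(S) \leqslant \mathbb{E}_S[\Phi(S)] + C\sqrt{\frac{\log(2/\delta)}{2M}}.
\]

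Next, the core step is a symmetrization argument bounding \(\mathbb{E}_S[\Phi(S)]\) by the Rademacher complexity. I would introduce an independent ghost sample \(S' = (z'^{(1)},\dots,z'^{(M)})\) drawn i.i.d.\ from \(\mathcal{D}\), write \(\mathbb{E}[g(z)]\) as the expectation of the empirical average over \(S'\), push the ghost expectation inside the supremum by Jensen's inequality, and then use that the increments \(g(z'^{(m)}) - g(z^{(m)})\) have a distribution invariant under independent sign flips. Inserting Rademacher variables \(\sigma_m\) and splitting the resulting sum into its two halves gives \(\mathbb{E}_S[\Phi(S)] \leqslant 2\,\mathfrak{R}_M(\mathcal{G})\), where \(\mathfrak{R}_M(\mathcal{G}) = \mathbb{E}_S[\widehat{\mathfrak{R}}_{S}(\mathcal{G})]\) is the expected Rademacher complexity. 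Finally, to replace this expected quantity by its empirical counterpart \(\widehat{\mathfrak{R}}_{S}(\mathcal{G})\), I would apply McDiarmid a second time: \(\widehat{\mathfrak{R}}_{S}(\mathcal{G})\) is itself a bounded-differences function with the same constant \(C/M\), so with probability at least \(1-\delta/2\) we have \(\mathfrak{R}_M(\mathcal{G}) \leqslant \widehat{\mathfrak{R}}_{S}(\mathcal{G}) + C\sqrt{\log(2/\delta)/(2M)}\). A union bound over the two \(\delta/2\) failure events and back-substitution produces the tail term \(C + 2C = 3C\) times \(\sqrt{\log(2/\delta)/(2M)}\), which is exactly the claimed bound. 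The main obstacle is the symmetrization step: correctly introducing the ghost sample and justifying the sign-flip invariance while keeping the constant at precisely \(2\); by contrast, the two concentration applications are routine once the bounded-differences constant \(C/M\) has been identified.
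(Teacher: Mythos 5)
Your proposal is correct, and it is essentially the same argument as the source the paper relies on: the paper does not prove this lemma itself but imports it verbatim as Theorem~3.3 of Ref.~\cite{mohri2018foundations}, whose proof is exactly your route — McDiarmid's inequality applied to $\Phi(S)=\sup_{g\in\mathcal{G}}\bigl(\mathbb{E}[g(z)]-\frac{1}{M}\sum_{m}g(z^{(m)})\bigr)$ with bounded-difference constant $C/M$, ghost-sample symmetrization giving the factor $2\,\mathfrak{R}_M(\mathcal{G})$, a second McDiarmid step to replace the expected Rademacher complexity by its empirical version, and a union bound yielding the $C+2C=3C$ coefficient. Your scaling of the concentration constants to the range $[0,C]$ is also handled correctly.
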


 \begin{lemma}[Talagrand's lemma, Lemma 5.7 in Ref.~\cite{mohri2018foundations}]
   \label{lem:generalization_Rademacher_Lipschitz}
   Let $r: \mathcal{Y} \times \mathcal{Y} \rightarrow \mathbb{R}$ be an $L$-Lipschitz function with respect to its first variable. For any datasets $ S= \{(x^{(m)},y^{(m)})\}_{m=1}^{M} $, let $ S_{\mathcal{X}} $ denote its projection over $ \mathcal{X}: S_{\mathcal{X}}= \{(x^{(m)})\}_{m=1}^{M} $. Then, the following relation holds between the empirical Rademacher complexities of $ \mathcal{G} $ and $ \mathcal{H} $:
  $$
  \begin{aligned}
    \widehat{\mathfrak{R}}_{S}(\mathcal{G}) = \widehat{\mathfrak{R}}_{S}(r \circ \mathcal{H}) \leqslant L \widehat{\mathfrak{R}}_{S_{\mathcal{X}}}(\mathcal{H}) .
  \end{aligned}
  $$
 \end{lemma}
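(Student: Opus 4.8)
The plan is to prove Talagrand's lemma by the standard \emph{contraction} (peeling) argument: eliminate one Rademacher variable at a time, replacing the Lipschitz map by the identity up to the factor $L$. Writing $g = r \circ h$ and noting that for the $m$-th sample the relevant map is $\phi_m(t) = r(t, y^{(m)})$, which is $L$-Lipschitz for each fixed $y^{(m)}$, I would start from the definition
$$\widehat{\mathfrak{R}}_{S}(r \circ \mathcal{H}) = \frac{1}{M}\,\underset{\boldsymbol{\sigma}}{\mathbb{E}}\Big[\sup_{h \in \mathcal{H}} \sum_{m=1}^{M} \sigma_m\, \phi_m(h(x^{(m)}))\Big].$$

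First I would isolate the last variable $\sigma_M$: conditioning on $\sigma_1, \dots, \sigma_{M-1}$ and abbreviating $u_h = \sum_{m=1}^{M-1}\sigma_m \phi_m(h(x^{(m)}))$, I would average over $\sigma_M \in \{-1,+1\}$ to obtain
$$\underset{\sigma_M}{\mathbb{E}}\Big[\sup_h \big(u_h + \sigma_M \phi_M(h(x^{(M)}))\big)\Big] = \tfrac{1}{2}\sup_{h,h'}\big[u_h + u_{h'} + \phi_M(h(x^{(M)})) - \phi_M(h'(x^{(M)}))\big].$$
Because the supremum ranges over all ordered pairs $(h,h')$, swapping $h \leftrightarrow h'$ shows this equals $\tfrac{1}{2}\sup_{h,h'}\big[u_h + u_{h'} + |\phi_M(h(x^{(M)})) - \phi_M(h'(x^{(M)}))|\big]$.

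Next I would apply the $L$-Lipschitz bound $|\phi_M(h(x^{(M)})) - \phi_M(h'(x^{(M)}))| \le L\,|h(x^{(M)}) - h'(x^{(M)})|$, then run the ordered-pair symmetry in reverse to drop the absolute value, obtaining
$$\tfrac{1}{2}\sup_{h,h'}\big[u_h + L h(x^{(M)}) + u_{h'} - L h'(x^{(M)})\big] = \underset{\sigma_M}{\mathbb{E}}\Big[\sup_h \big(u_h + \sigma_M L\, h(x^{(M)})\big)\Big].$$
This replaces $\phi_M$ by $L$ times the identity on the $M$-th coordinate. Iterating over $m = M-1, \dots, 1$, each time peeling off $\sigma_m$ and invoking the Lipschitz constant of $\phi_m$, turns every $\phi_m(h(x^{(m)}))$ into $L\,h(x^{(m)})$; factoring out $L$ and recognizing the remaining expectation as $\widehat{\mathfrak{R}}_{S_{\mathcal{X}}}(\mathcal{H})$ yields the claim, while the equality $\widehat{\mathfrak{R}}_{S}(\mathcal{G}) = \widehat{\mathfrak{R}}_{S}(r \circ \mathcal{H})$ is immediate from $g = r \circ h$.

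I expect the main obstacle to be the bookkeeping in the pairing step: correctly justifying that the unordered supremum introduces the absolute value, and then that the absolute value can be removed \emph{after} the Lipschitz bound by re-splitting the supremum into $\sup_h (u_h + L h(x^{(m)}))$ and $\sup_{h'} (u_{h'} - L h'(x^{(m)}))$. A secondary point worth stating explicitly is that the maps $\phi_m(t) = r(t, y^{(m)})$ may differ across samples, so the contraction must proceed index-by-index rather than with a single global $\phi$; this is harmless since every $\phi_m$ shares the common Lipschitz constant $L$.
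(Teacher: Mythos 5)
Your proposal is correct: it is precisely the standard contraction (peeling) argument for Talagrand's lemma, including the two delicate points you flag — introducing and later removing the absolute value via the ordered-pair symmetry, and running the contraction index-by-index so that the sample-dependent maps $\phi_m(t) = r(t, y^{(m)})$ with a common Lipschitz constant $L$ are handled correctly. The paper itself offers no proof of this statement — it imports it verbatim as Lemma 5.7 of Ref.~\cite{mohri2018foundations} — and your argument is essentially the proof given in that reference, so there is nothing to reconcile.
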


\begin{lemma}
  \label{lem:Rademacher_compare}
  Consider two hypothesis spaces $\mathcal{H}_1$ and $\mathcal{H}_2$. If $\mathcal{H}_1 \subseteq \mathcal{H}_2$, then for any dataset $S_{\mathcal{X}} = \{(\boldsymbol{x}^{(m)})\}_{m=1}^{M}$, we have $\widehat{\mathfrak{R}}_{S_{\mathcal{X}}}(\mathcal{H}_1) \leqslant \widehat{\mathfrak{R}}_{S_{\mathcal{X}}}(\mathcal{H}_2)$.
\end{lemma}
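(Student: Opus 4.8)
The plan is to reduce the claimed inequality to an elementary monotonicity fact about suprema, combined with the monotonicity of expectation. Recall that the empirical Rademacher complexity of a class is an expectation over the Rademacher vector $\boldsymbol{\sigma}$ of a supremum, over the class, of the random average $\frac{1}{M}\sum_{m=1}^{M}\sigma_m h(x^{(m)})$. Because the averaging over $\boldsymbol{\sigma}$ is identical for $\mathcal{H}_1$ and $\mathcal{H}_2$, the whole argument can be carried out pointwise in $\boldsymbol{\sigma}$ and then transported through the expectation.

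First I would fix an arbitrary realization of the Rademacher vector $\boldsymbol{\sigma} = [\sigma_1,\dots,\sigma_M]^{\top}$ and define, for each hypothesis $h$, the scalar $\Phi_{\boldsymbol{\sigma}}(h) = \frac{1}{M}\sum_{m=1}^{M}\sigma_m h(x^{(m)})$. The key observation is that passing to a supremum over a smaller index set cannot increase its value: since $\mathcal{H}_1 \subseteq \mathcal{H}_2$, every $h \in \mathcal{H}_1$ is also a member of $\mathcal{H}_2$, so $\{\Phi_{\boldsymbol{\sigma}}(h): h \in \mathcal{H}_1\} \subseteq \{\Phi_{\boldsymbol{\sigma}}(h): h \in \mathcal{H}_2\}$. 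Consequently
$$\sup_{h \in \mathcal{H}_1}\Phi_{\boldsymbol{\sigma}}(h) \leqslant \sup_{h \in \mathcal{H}_2}\Phi_{\boldsymbol{\sigma}}(h),$$
and this inequality holds for every fixed $\boldsymbol{\sigma}$.

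Finally I would take the expectation over $\boldsymbol{\sigma}$ on both sides. Because the inequality between the two suprema holds pointwise for each realization of $\boldsymbol{\sigma}$, monotonicity of the expectation preserves it, yielding $\underset{\boldsymbol{\sigma}}{\mathbb{E}}[\sup_{h \in \mathcal{H}_1}\Phi_{\boldsymbol{\sigma}}(h)] \leqslant \underset{\boldsymbol{\sigma}}{\mathbb{E}}[\sup_{h \in \mathcal{H}_2}\Phi_{\boldsymbol{\sigma}}(h)]$, which is precisely $\widehat{\mathfrak{R}}_{S_{\mathcal{X}}}(\mathcal{H}_1) \leqslant \widehat{\mathfrak{R}}_{S_{\mathcal{X}}}(\mathcal{H}_2)$. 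There is no genuine obstacle here; the only point worth a moment's care is that the suprema be finite so that the expectations are well defined, which is guaranteed because the hypotheses map into a bounded range (each $\Phi_{\boldsymbol{\sigma}}(h)$ is uniformly bounded). The entire statement is thus a direct consequence of inclusion-monotonicity of the supremum followed by monotonicity of the expectation.
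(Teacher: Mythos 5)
Your proof is correct and follows exactly the same route as the paper's: for each fixed $\boldsymbol{\sigma}$ the supremum over the smaller class $\mathcal{H}_1 \subseteq \mathcal{H}_2$ is dominated by the supremum over the larger one, and monotonicity of expectation transfers this pointwise inequality to the Rademacher complexities. Your added remark about finiteness of the suprema is a harmless extra precaution that the paper omits.
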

\begin{proof}
  Since $\mathcal{H}_1 \subseteq \mathcal{H}_2$, for any dataset $S_{\mathcal{X}} = \{(\boldsymbol{x}^{(m)})\}_{m=1}^{M}$, we have
  $$\sup_{h \in \mathcal{H}_1} \sum_{m=1}^{M} \sigma_m h(\boldsymbol{x}^{(m)}) \leqslant \sup_{h \in \mathcal{H}_2} \sum_{m=1}^{M} \sigma_m h(\boldsymbol{x}^{(m)}).$$
  Therefore, $\widehat{\mathfrak{R}}_{S_{\mathcal{X}}}(\mathcal{H}_1) \leqslant \widehat{\mathfrak{R}}_{S_{\mathcal{X}}}(\mathcal{H}_2).$
\end{proof}

\begin{lemma}
  \label{alem:compact_set}
  For any $d$-dimensional vector $\boldsymbol{x} \in \mathcal{X} \subseteq \mathbb{R}^{d}$ taken from the input space $\mathcal{X}$, the set $\Omega = \{\boldsymbol{w}: \|\boldsymbol{w}\|_{2} = 1, |\boldsymbol{w}^{\top} \boldsymbol{x} | \leqslant 1, \forall \boldsymbol{x} \in \mathcal{X}\} \subseteq \mathbb{R}^{d}$ is a non-empty, bounded, closed set.
\end{lemma}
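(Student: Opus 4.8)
The plan is to verify the three topological properties separately, treating boundedness and closedness as routine and concentrating the real effort on non-emptiness. First I would rewrite the set as $\Omega = S^{d-1} \cap \bigcap_{\boldsymbol{x} \in \mathcal{X}} H_{\boldsymbol{x}}$, where $S^{d-1} = \{\boldsymbol{w} : \|\boldsymbol{w}\|_2 = 1\}$ is the unit sphere and each $H_{\boldsymbol{x}} = \{\boldsymbol{w} : |\boldsymbol{w}^\top \boldsymbol{x}| \leqslant 1\}$ is the closed slab associated with an input $\boldsymbol{x} \in \mathcal{X}$. This decomposition makes the first two claims almost immediate.

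For boundedness, every $\boldsymbol{w} \in \Omega$ satisfies $\|\boldsymbol{w}\|_2 = 1$ by definition, so $\Omega$ lies inside the closed unit ball and is bounded. For closedness, I would use that the map $\boldsymbol{w} \mapsto \|\boldsymbol{w}\|_2$ is continuous, so $S^{d-1}$ is the preimage of the closed set $\{1\}$ and hence closed; likewise each linear functional $\boldsymbol{w} \mapsto \boldsymbol{w}^\top \boldsymbol{x}$ is continuous, so $H_{\boldsymbol{x}}$ is the preimage of $[-1,1]$ and closed. Since an arbitrary intersection of closed sets is closed, $\Omega$ is closed. (Together these give compactness via Heine--Borel, which is presumably the reason the lemma is needed downstream.)

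The hard part will be non-emptiness, because for a generic input set $\mathcal{X}$ the slab constraints could in principle rule out the entire sphere, leaving $\Omega$ empty. This is where the quantum structure of $\mathcal{X}$ is essential, and the cleanest route is to exhibit an explicit witness rather than argue abstractly. I would take $\boldsymbol{w} = \boldsymbol{m}$, the Pauli coefficient vector of a single Pauli-string observable: it is a standard basis vector, so $\|\boldsymbol{m}\|_2 = 1$, and by Corollary~\ref{acor:measure_properties} we have $|\boldsymbol{m}^\top \boldsymbol{\alpha}| \leqslant 1$ for the coefficient vector $\boldsymbol{\alpha}$ of every quantum state, i.e.\ for every $\boldsymbol{x} \in \mathcal{X}$. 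Hence $\boldsymbol{m} \in \Omega$, so $\Omega \neq \emptyset$, which completes all three required properties.
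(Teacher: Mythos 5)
Your proof is correct, and for boundedness and closedness it follows essentially the same route as the paper: the same decomposition $\Omega = \{\boldsymbol{w}:\|\boldsymbol{w}\|_2=1\}\cap\bigcap_{\boldsymbol{x}\in\mathcal{X}}\{\boldsymbol{w}:|\boldsymbol{w}^{\top}\boldsymbol{x}|\leqslant 1\}$, with closedness obtained from "arbitrary intersections of closed sets are closed" (the paper writes each slab as an intersection of two half-spaces, you take preimages of closed sets under continuous maps; these are the same argument in different clothing).

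The genuine difference is your treatment of non-emptiness. The paper's own proof never addresses it: despite the lemma asserting that $\Omega$ is non-empty, the written proof establishes only closedness and boundedness. Your instinct that this is "the hard part" is accurate, and your diagnosis is sharper than the paper's: for an arbitrary $\mathcal{X}\subseteq\mathbb{R}^{d}$ the claim is actually false (take $\mathcal{X}$ to be the sphere of radius $2$; then for any unit vector $\boldsymbol{w}$ the point $\boldsymbol{x}=2\boldsymbol{w}\in\mathcal{X}$ violates $|\boldsymbol{w}^{\top}\boldsymbol{x}|\leqslant 1$, so $\Omega=\emptyset$). Non-emptiness therefore must come from the structure of the intended input space, and your witness $\boldsymbol{w}=\boldsymbol{m}$ — a standard basis vector with $|\boldsymbol{m}^{\top}\boldsymbol{\alpha}|\leqslant 1$ for every quantum-state coefficient vector $\boldsymbol{\alpha}$ by Corollary~\ref{acor:measure_properties} — supplies exactly that. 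This is not a pedantic point: non-emptiness of $\Omega$ is what Lemma~\ref{alem:extreme_value_theorem} needs in order to produce the maximizer $\boldsymbol{w}^{*}(\boldsymbol{\sigma})$ inside the proof of Theorem~\ref{athm:Rademacher_quantum}, so your version closes a small but real gap in the paper's argument (at the mild cost of making the lemma's validity explicitly conditional on $\mathcal{X}$ consisting of Pauli coefficient vectors of quantum states, which is how it is used).
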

\begin{proof}
  The set $\Omega$ can be written as the intersection of the following sets:
  $$\Omega = \{\boldsymbol{w}: \|\boldsymbol{w}\|_{2} = 1\} \cap \bigcap_{\boldsymbol{x} \in \mathcal{X}} \{\boldsymbol{w}: |\boldsymbol{w}^{\top} \boldsymbol{x}| \leqslant 1\}.$$
  Here, $\{\boldsymbol{w}: \|\boldsymbol{w}\|_{2} = 1\}$ is the unit sphere, which is a closed  set. For each fixed $\boldsymbol{x} \in \mathcal{X}$, the set $\{\boldsymbol{w}: |\boldsymbol{w}^{\top} \boldsymbol{x}| \leqslant 1\} = \{\boldsymbol{w}: \boldsymbol{w}^{\top} \boldsymbol{x} \leqslant 1\} \cap  \{\boldsymbol{w}: \boldsymbol{w}^{\top} \boldsymbol{x} \geqslant -1\}$ is the intersection of two half-spaces, which is also a closed set. Since the intersection of any collection of closed sets remains closed, the set $\Omega$ is closed.  Clearly, for any $\boldsymbol{w} \in \Omega$, we have $\|\boldsymbol{w}\|_{2} = 1$, so $\Omega$ is bounded.
\end{proof}

\begin{lemma}
  \label{alem:extreme_value_theorem}
  For any non-empty, bounded, closed set $\Omega \subseteq \mathbb{R}^{d}$ and any given vector $\boldsymbol{s} \in \mathbb{R}^{d}$, there exists $\boldsymbol{w}^{*} \in \Omega$ such that:
  $$
  \begin{aligned}
  \sup_{\boldsymbol{w} \in \Omega} \boldsymbol{w}^{\top} \boldsymbol{s} = (\boldsymbol{w}^{*})^{\top} \boldsymbol{s}.
  \end{aligned}
  $$
\end{lemma}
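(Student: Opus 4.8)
The plan is to invoke the Extreme Value Theorem (Weierstrass's theorem): a continuous real-valued function on a compact subset of $\mathbb{R}^d$ attains its supremum. The objective here is the linear functional $f(\boldsymbol{w}) = \boldsymbol{w}^{\top} \boldsymbol{s}$, which is continuous in $\boldsymbol{w}$ for any fixed $\boldsymbol{s}$, while the feasible set $\Omega$ is assumed non-empty, bounded, and closed. Since $\Omega \subseteq \mathbb{R}^d$ is bounded and closed, the Heine--Borel theorem guarantees that $\Omega$ is compact, and a maximizer therefore exists. In the intended application $\Omega$ is precisely the set studied in the preceding lemma, whose hypotheses (non-empty, bounded, closed) were already verified there.

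If a self-contained argument is preferred over citing Weierstrass directly, I would argue via sequential compactness. First, set $s^* = \sup_{\boldsymbol{w} \in \Omega} \boldsymbol{w}^{\top} \boldsymbol{s}$; this supremum is finite because $\Omega$ is bounded (so Cauchy--Schwarz gives $|\boldsymbol{w}^{\top}\boldsymbol{s}| \leqslant \|\boldsymbol{w}\|_2 \|\boldsymbol{s}\|_2$ uniformly over $\Omega$) and non-empty (so the supremum is taken over a non-empty set of reals). By definition of the supremum, choose a maximizing sequence $\{\boldsymbol{w}_n\} \subseteq \Omega$ with $\boldsymbol{w}_n^{\top} \boldsymbol{s} \to s^*$. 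Since $\{\boldsymbol{w}_n\}$ lies in the bounded set $\Omega$, the Bolzano--Weierstrass theorem yields a convergent subsequence $\boldsymbol{w}_{n_k} \to \boldsymbol{w}^*$. Because $\Omega$ is closed, the limit $\boldsymbol{w}^*$ belongs to $\Omega$. Finally, by continuity of the inner product, $(\boldsymbol{w}^*)^{\top} \boldsymbol{s} = \lim_{k \to \infty} \boldsymbol{w}_{n_k}^{\top} \boldsymbol{s} = s^*$, so the supremum is attained at $\boldsymbol{w}^* \in \Omega$.

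I do not expect any genuine obstacle: the statement is a textbook consequence of compactness. The only points requiring minor care are confirming that the supremum is finite (ensured by boundedness together with non-emptiness) and that the limit of the maximizing subsequence stays feasible (ensured by closedness). This lemma then supplies exactly the existence of the optimal $\boldsymbol{w}^*$ that is needed to replace the supremum by a maximum when bounding the Rademacher complexity of $\mathcal{H}$.
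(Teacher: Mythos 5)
Your proposal is correct and follows exactly the paper's argument: the paper likewise applies the Heine--Borel theorem to conclude $\Omega$ is compact and then invokes the Extreme Value Theorem for the continuous map $\boldsymbol{w} \mapsto \boldsymbol{w}^{\top}\boldsymbol{s}$. Your additional self-contained Bolzano--Weierstrass argument is a standard proof of that theorem and is fine, but it adds nothing beyond what the paper's citation already covers.
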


\begin{proof}
  Since $\Omega$ is a non-empty, bounded, closed set in $\mathbb{R}^{d}$, by the Heine-Borel Theorem, $\Omega$ is compact. Moreover, since the function $\boldsymbol{w} \mapsto \boldsymbol{w}^{\top} \boldsymbol{s}$ is continuous on $\Omega$, by the Extreme Value Theorem~\cite{rudin1976principles}, there exists $\boldsymbol{w}^{*} \in \Omega$ such that $\sup_{\boldsymbol{w} \in \Omega} \boldsymbol{w}^{\top} \boldsymbol{s} = (\boldsymbol{w}^{*})^{\top} \boldsymbol{s}$.

\end{proof}

The following theorem demonstrates the Rademacher complexity of hypothesis space $\mathcal{H} = \{h(\boldsymbol{x}) = \boldsymbol{w}^{\top} \boldsymbol{x}: \boldsymbol{w} \in \Omega \}$:

\begin{theorem}
  \label{athm:Rademacher_quantum}
  For a dataset $S = \{(\boldsymbol{x}^{(m)},y^{(m)})\}_{m=1}^{M}$ of size $M$, let $S_{\mathcal{X}} = \{(\boldsymbol{x}^{(m)})\}_{m=1}^{M}$ be the projection of $S$ over $\mathcal{X}$. The empirical Rademacher complexity of the hypothesis space $\mathcal{H} = \{h(\boldsymbol{x}) = \boldsymbol{w}^{\top} \boldsymbol{x}: \boldsymbol{w} \in \Omega \}$ satisfies
  $$
  \begin{aligned}
    \widehat{\mathfrak{R}}_{S_{\mathcal{X}}}(\mathcal{H}) \leqslant \sqrt{\frac{1}{M}},
  \end{aligned}
  $$
  where $\Omega = \{\boldsymbol{w}: \|\boldsymbol{w}\|_{2} = 1, |\boldsymbol{w}^{\top} \boldsymbol{x}| \leqslant 1, \forall  \  \boldsymbol{x} \in \mathcal{X}\}$.
\end{theorem}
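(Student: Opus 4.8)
The plan is to unfold the definition of the empirical Rademacher complexity and exploit both the linearity of the hypotheses and the geometry of $\Omega$. Writing $h(\boldsymbol{x}) = \boldsymbol{w}^\top\boldsymbol{x}$ and pulling the factor $\tfrac{1}{M}$ outside, I would first observe that $\sum_{m=1}^M \sigma_m \boldsymbol{w}^\top\boldsymbol{x}^{(m)} = \boldsymbol{w}^\top\boldsymbol{s}$, where $\boldsymbol{s} := \sum_{m=1}^M \sigma_m \boldsymbol{x}^{(m)}$ is a fixed vector once $\boldsymbol{\sigma}$ has been drawn. Hence $\widehat{\mathfrak{R}}_{S_{\mathcal{X}}}(\mathcal{H}) = \tfrac{1}{M}\,\mathbb{E}_{\boldsymbol{\sigma}}\big[\sup_{\boldsymbol{w}\in\Omega}\boldsymbol{w}^\top\boldsymbol{s}\big]$, which reduces the whole problem to controlling the support function of $\Omega$ evaluated at the random vector $\boldsymbol{s}$.

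Next I would invoke the structural lemmas already in hand. By Lemma~\ref{alem:compact_set} the set $\Omega$ is non-empty, closed and bounded, so by Lemma~\ref{alem:extreme_value_theorem} the supremum $\sup_{\boldsymbol{w}\in\Omega}\boldsymbol{w}^\top\boldsymbol{s}$ is attained at some $\boldsymbol{w}^{\star} = \boldsymbol{w}^{\star}(\boldsymbol{s})\in\Omega$, and in particular $\|\boldsymbol{w}^{\star}\|_2 = 1$. Applying the Cauchy--Schwarz inequality gives the pointwise (in $\boldsymbol{\sigma}$) estimate $\sup_{\boldsymbol{w}\in\Omega}\boldsymbol{w}^\top\boldsymbol{s} = (\boldsymbol{w}^{\star})^\top\boldsymbol{s} \le \|\boldsymbol{w}^{\star}\|_2\,\|\boldsymbol{s}\|_2 = \|\boldsymbol{s}\|_2$. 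This is the step where the unit-norm constraint defining $\Omega$ enters, and the role of the extreme value lemma is simply to furnish a genuine maximizer of unit length so that Cauchy--Schwarz can be applied cleanly; since $\boldsymbol{\sigma}$ ranges over the finite set $\{-1,+1\}^M$, the subsequent expectation is a finite average and no integrability subtlety arises.

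I would then push the expectation inside. Because the square root is concave, Jensen's inequality yields $\mathbb{E}_{\boldsymbol{\sigma}}\big[\|\boldsymbol{s}\|_2\big] \le \sqrt{\mathbb{E}_{\boldsymbol{\sigma}}\big[\|\boldsymbol{s}\|_2^2\big]}$. Expanding, $\|\boldsymbol{s}\|_2^2 = \sum_{m,m'} \sigma_m\sigma_{m'}\,(\boldsymbol{x}^{(m)})^\top\boldsymbol{x}^{(m')}$, and using that the Rademacher variables are independent, mean zero, and satisfy $\mathbb{E}[\sigma_m\sigma_{m'}] = \delta_{m,m'}$, every cross term vanishes and $\mathbb{E}_{\boldsymbol{\sigma}}[\|\boldsymbol{s}\|_2^2] = \sum_{m=1}^M \|\boldsymbol{x}^{(m)}\|_2^2$. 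Combining the three displays gives $\widehat{\mathfrak{R}}_{S_{\mathcal{X}}}(\mathcal{H}) \le \tfrac{1}{M}\sqrt{\sum_{m=1}^M \|\boldsymbol{x}^{(m)}\|_2^2}$.

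The final step, which I expect to be the main obstacle, is converting this into the claimed constant $\sqrt{1/M}$: this requires controlling each per-sample norm by $\|\boldsymbol{x}^{(m)}\|_2 \le 1$, so that $\sum_{m}\|\boldsymbol{x}^{(m)}\|_2^2 \le M$ and one factor of $M$ cancels the $\sqrt{M}$. The delicate point is that relaxing $\sup_{\boldsymbol{w}\in\Omega}$ all the way to $\|\boldsymbol{s}\|_2$ discards the slab constraints $|\boldsymbol{w}^\top\boldsymbol{x}|\le 1$ that define $\Omega$, so one must check that the normalization assumed for the data vectors is genuinely compatible with those constraints and that the resulting bound remains independent of the dimension; for the quantum instantiation this is where the boundedness $|\boldsymbol{m}^\top\boldsymbol{\alpha}|\le 1$ and the orthogonality of the Pauli transfer matrix established earlier must be brought to bear. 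I would therefore treat the verification of the per-sample norm control, rather than the Cauchy--Schwarz/Jensen machinery, as the crux of the argument.
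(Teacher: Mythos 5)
Your opening reductions coincide with the paper's, but the step where you apply Cauchy--Schwarz pointwise in $\boldsymbol{\sigma}$, namely $(\boldsymbol{w}^{\star})^{\top}\boldsymbol{s}\leqslant\|\boldsymbol{w}^{\star}\|_2\,\|\boldsymbol{s}\|_2=\|\boldsymbol{s}\|_2$, opens a gap that cannot be closed. That relaxation throws away the slab constraints $|\boldsymbol{w}^{\top}\boldsymbol{x}|\leqslant 1$ defining $\Omega$, and those constraints are the \emph{only} control the theorem provides over the data: there is no hypothesis $\|\boldsymbol{x}^{(m)}\|_2\leqslant 1$ anywhere in the statement. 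Your chain of inequalities therefore terminates at $\widehat{\mathfrak{R}}_{S_{\mathcal{X}}}(\mathcal{H})\leqslant\frac{1}{M}\sqrt{\sum_{m}\|\boldsymbol{x}^{(m)}\|_2^2}$, and in the intended quantum instantiation the data are Pauli coefficient vectors of pure states with $\|\boldsymbol{\alpha}\|_2^2=2^{N}$ (stated explicitly in the paper), so this evaluates to $\sqrt{2^{N}/M}$ --- exponentially weaker than the claim, and precisely the dimension-dependent bound the theorem is designed to avoid. The quantities you hope to ``bring to bear'' at the end, $|\boldsymbol{m}^{\top}\boldsymbol{\alpha}|\leqslant 1$ and orthogonality of the Pauli transfer matrix, bound inner products of the data against \emph{admissible} vectors $\boldsymbol{w}$, not Euclidean norms of the data themselves; once the supremum has been relaxed to $\|\boldsymbol{s}\|_2$, that information can no longer enter. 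So the obstacle you flagged is not a routine verification left for later --- it is fatal to this route.

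The paper's proof diverges exactly at this point: it never passes to $\|\boldsymbol{s}\|_2$. Instead it applies Cauchy--Schwarz at the level of the expectation, $\mathbb{E}[Z]\leqslant\sqrt{\mathbb{E}[Z^{2}]}$ with $Z=\sum_{m}\sigma_m(\boldsymbol{w}^{*}(\boldsymbol{\sigma}))^{\top}\boldsymbol{x}^{(m)}$, expands $Z^{2}$ into the double sum $\sum_{m,n}\sigma_m\sigma_n(\boldsymbol{w}^{*}(\boldsymbol{\sigma}))^{\top}\boldsymbol{x}^{(m)}(\boldsymbol{x}^{(n)})^{\top}\boldsymbol{w}^{*}(\boldsymbol{\sigma})$, and invokes $|(\boldsymbol{w}^{*}(\boldsymbol{\sigma}))^{\top}\boldsymbol{x}^{(m)}|\leqslant 1$ --- available because the maximizer furnished by Lemma~\ref{alem:compact_set} and Lemma~\ref{alem:extreme_value_theorem} lies in $\Omega$ --- to compare $\mathbb{E}[Z^{2}]$ with $\mathbb{E}\bigl[\sum_{m,n}\sigma_m\sigma_n\bigr]=M$; no norm of $\boldsymbol{x}^{(m)}$ ever appears. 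Incidentally, your instinct that this is the crux of the whole argument is correct, and the paper's own treatment of it is delicate: since $\boldsymbol{w}^{*}$ depends on $\boldsymbol{\sigma}$, the cross terms $\mathbb{E}\bigl[\sigma_m\sigma_n(\boldsymbol{w}^{*}(\boldsymbol{\sigma}))^{\top}\boldsymbol{x}^{(m)}(\boldsymbol{x}^{(n)})^{\top}\boldsymbol{w}^{*}(\boldsymbol{\sigma})\bigr]$ need not vanish, and the pointwise comparison $\bigl(\sum_m\sigma_m c_m\bigr)^{2}\leqslant\bigl(\sum_m\sigma_m\bigr)^{2}$ for $|c_m|\leqslant 1$ is false in general (take $M=2$, $\sigma_1=1$, $\sigma_2=-1$, $c_1=1$, $c_2=-1$), so the one-sentence justification given there does not suffice as stated. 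But independently of how one assesses the paper's step, your proposal as written does not establish the claimed $\sqrt{1/M}$ bound.
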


\begin{proof}
\begin{equation*}
\begin{aligned}
    \widehat{\mathfrak{R}}_{S_{\mathcal{X}}}(\mathcal{H}) & =\frac{1}{M} \underset{\boldsymbol{\sigma}}{\mathbb{E}}\left[\sup_{h \in \mathcal{H}} \sum_{m=1}^{M} \sigma_{m} h(\boldsymbol{x}^{(m)}) \right]  \\
    & = \frac{1}{M} \underset{\boldsymbol{\sigma}}{\mathbb{E}}\left[\sup_{\boldsymbol{w} \in \Omega } \sum_{m=1}^{M} \sigma_m \boldsymbol{w}^{\top} \boldsymbol{x}^{(m)} \right] \\
    & = \frac{1}{M} \underset{\boldsymbol{\sigma}}{\mathbb{E}}\left[\sup_{\boldsymbol{w} \in \Omega } \boldsymbol{w}^{\top} \sum_{m=1}^{M} \sigma_m  \boldsymbol{x}^{(m)} \right]. \\
\end{aligned}
\end{equation*}
Let $\boldsymbol{s} = \sum_{m=1}^{M} \sigma_{m} \boldsymbol{x}^{(m)}$. According to  Lemma~\ref{alem:compact_set} and Lemma~\ref{alem:extreme_value_theorem}, for a fixed $\boldsymbol{\sigma}$, there exists $\boldsymbol{w}^{*}(\boldsymbol{\sigma}) \in \Omega$ such that 
$$
\begin{aligned}
  \sup_{\boldsymbol{w} \in \Omega} \boldsymbol{w}^{\top} \boldsymbol{s} = (\boldsymbol{w}^{*}(\boldsymbol{\sigma}))^{\top} \boldsymbol{s}.
\end{aligned}
$$
Therefore,
$$
\begin{aligned}
\widehat{\mathfrak{R}}_{S_{\mathcal{X}}}(\mathcal{H}) &= \frac{1}{M} \underset{\boldsymbol{\sigma}}{\mathbb{E}}\left[ \sum_{m=1}^{M} \sigma_m (\boldsymbol{w}^{*}(\boldsymbol{\sigma}))^{\top} \boldsymbol{x}^{(m)} \right] \\
& \leqslant  \frac{1}{M} \sqrt{\underset{\boldsymbol{\sigma}}{\mathbb{E}} \left[ \left(  \sum_{m=1}^{M} \sigma_m (\boldsymbol{w}^{*}(\boldsymbol{\sigma}))^{\top} \boldsymbol{x}^{(m)} \right)^2 \right] },  \\
\end{aligned}
$$
where the inequality follows from applying the Cauchy Schwarz inequality to the expectation, viewing  $Z = \sum_{m=1}^{M} \sigma_m (\boldsymbol{w}^{*}(\boldsymbol{\sigma}))^{\top} \boldsymbol{x}^{(m)}$ as a random variable and noting that $\mathbb{E}[Z] = \mathbb{E}[Z \cdot 1] \leqslant \sqrt{\mathbb{E}[Z^2] \mathbb{E}[1^2]} = \sqrt{\mathbb{E}[Z^2]}$. Thus

$$
\begin{aligned}
\underset{\boldsymbol{\sigma}}{\mathbb{E}} \left[ \left(  \sum_{m=1}^{M} \sigma_m (\boldsymbol{w}^{*}(\boldsymbol{\sigma}))^{\top} \boldsymbol{x}^{(m)} \right)^2  \right] &=  \underset{\boldsymbol{\sigma}}{\mathbb{E}} \left[ \sum_{m=1}^{M} \sum_{n=1}^{M} \sigma_m \sigma_n (\boldsymbol{w}^{*}(\boldsymbol{\sigma}))^{\top} \boldsymbol{x}^{(m)} (\boldsymbol{x}^{(n)})^{\top} \boldsymbol{w}^{*}(\boldsymbol{\sigma})\right]  \\
& \leqslant \underset{\boldsymbol{\sigma}}{\mathbb{E}} \left[ \sum_{m=1}^{M} \sum_{n=1}^{M} \sigma_m \sigma_n \right] \\
& = \sum_{m=1}^{M} \sum_{n=1}^{M} \underset{\boldsymbol{\sigma}}{\mathbb{E}}[\sigma_m \sigma_n] \\
&= M, \\  
\end{aligned}
$$
where the inequality follows from  $\boldsymbol{w}^{*}(\boldsymbol{\sigma}) \in \Omega$, which implies $|(\boldsymbol{w}^{*}(\boldsymbol{\sigma}))^{\top} \boldsymbol{x}^{(m)}| \leqslant 1$ and $|(\boldsymbol{w}^{*}(\boldsymbol{\sigma}))^{\top} \boldsymbol{x}^{(n)}| \leqslant 1$ for all $\boldsymbol{x}^{(m)}, \boldsymbol{x}^{(n)} \in \mathcal{X}$. The last equality uses the fact that $\underset{\boldsymbol{\sigma}}{\mathbb{E}}[\sigma_{i} \sigma_{j}] = 0$ for $i \neq j$ and $\underset{\boldsymbol{\sigma}}{\mathbb{E}}[\sigma_{i}^2] = 1$ since the Rademacher variables are independent. Combining these results:
\begin{equation*}
\begin{aligned}
    \widehat{\mathfrak{R}}_{S_{\mathcal{X}}}(\mathcal{H}) \leqslant \frac{1}{M} \sqrt{M } = \sqrt{\frac{1}{M}} .
\end{aligned}
\end{equation*}
\end{proof}

Finally, we can derive the generalization bound for quantum machine learning as follows:

\begin{theorem}[Theorem~\ref{thm:generalization_bound} in the main paper]
  Let \(\mathcal{D}\) be a data distribution over \(\mathcal{X} \times \mathcal{Y}\), and let \(S = \{(\boldsymbol{\alpha}^{(m)}, y^{(m)})\}_{m=1}^M\) be a dataset of \(M\) independent and identically distributed (i.i.d.) samples drawn from \(\mathcal{D}\). Let the observable $O$ be a Pauli string with spectral norm $B_O$. Consider a quantum machine learning model trained on $S$ with respect to the observable $O$, which produces a hypothesis $h_S \in \mathcal{H}_Q$. Assume the non-negative risk function $r: \mathcal{Y} \times \mathcal{Y} \rightarrow \mathbb{R}$ is uniformly bounded by $C > 0$ and is $L$-Lipschitz in its first variable for any fixed $y \in \mathcal{Y}$. Then, with probability at least $1 - \delta$ over the random sampling of $S$, the generalization error of $h_S$ satisfies:
  \begin{equation*}
  \begin{aligned}
    \operatorname{gen}(h_S)  \leqslant 2 L B_O\sqrt{\frac{1}{M}} +3C \sqrt{\frac{\log \frac{2}{\delta}}{2 M}}.
  \end{aligned}
  \end{equation*}
\end{theorem}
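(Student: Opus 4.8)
The plan is to chain the Rademacher-complexity results assembled above into a single inequality, treating the spectral-norm factor $B_O$ as an overall scaling of the normalized hypothesis class. First I would pin down the relevant function class. Since the trained model measures a Pauli-string observable of spectral norm $B_O$, its output is $h(\boldsymbol{\alpha},\boldsymbol{\theta}) = B_O\cdot\boldsymbol{w}^{\top}\boldsymbol{\alpha}$, i.e.\ $B_O$ times an element of the normalized class $\mathcal{H}_Q$ of Eq.~\eqref{eq:HQ}. I would therefore work with the scaled class $B_O\mathcal{H}_Q = \{B_O h : h \in \mathcal{H}_Q\}$ and define the composite loss class $\mathcal{G} = r\circ(B_O\mathcal{H}_Q)$. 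The uniform bound $C$ on $r$ guarantees that every $g\in\mathcal{G}$ takes values in $[0,C]$, which is exactly the hypothesis needed to invoke Lemma~\ref{lem:generalization_Rademacher}.

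Next I would apply Lemma~\ref{lem:generalization_Rademacher} to $\mathcal{G}$. The key point is that this lemma holds uniformly over all $g\in\mathcal{G}$ with probability at least $1-\delta$, so it applies in particular to the data-dependent composite function $g_S = r\circ(B_O h_S)$ associated with the learned hypothesis. Identifying $R(h_S) = \mathbb{E}[g_S(z)]$ and $\widehat{R}_S(h_S) = \frac{1}{M}\sum_{m=1}^{M} g_S(z^{(m)})$, the lemma yields directly
$$\operatorname{gen}(h_S) = R(h_S) - \widehat{R}_S(h_S) \leqslant 2\widehat{\mathfrak{R}}_S(\mathcal{G}) + 3C\sqrt{\frac{\log\frac{2}{\delta}}{2M}}.$$

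It then remains to bound $\widehat{\mathfrak{R}}_S(\mathcal{G})$. I would peel off the loss via Talagrand's lemma (Lemma~\ref{lem:generalization_Rademacher_Lipschitz}), using the $L$-Lipschitz assumption on $r$, to get $\widehat{\mathfrak{R}}_S(\mathcal{G}) \leqslant L\,\widehat{\mathfrak{R}}_{S_{\mathcal{X}}}(B_O\mathcal{H}_Q)$. Positive-homogeneity of the empirical Rademacher complexity then pulls $B_O>0$ out of the supremum, giving $\widehat{\mathfrak{R}}_{S_{\mathcal{X}}}(B_O\mathcal{H}_Q) = B_O\,\widehat{\mathfrak{R}}_{S_{\mathcal{X}}}(\mathcal{H}_Q)$. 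Finally, the containment $\mathcal{H}_Q\subseteq\mathcal{H}$ established in Appendix~\ref{asec:pauli_basis}, combined with the monotonicity Lemma~\ref{lem:Rademacher_compare} and the complexity estimate Theorem~\ref{athm:Rademacher_quantum}, gives $\widehat{\mathfrak{R}}_{S_{\mathcal{X}}}(\mathcal{H}_Q) \leqslant \widehat{\mathfrak{R}}_{S_{\mathcal{X}}}(\mathcal{H}) \leqslant \sqrt{1/M}$. Substituting back through the chain produces the claimed bound $2LB_O\sqrt{1/M} + 3C\sqrt{(\log\frac{2}{\delta})/(2M)}$.

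The analytically hard part has already been carried out in Theorem~\ref{athm:Rademacher_quantum}, where the $\sqrt{1/M}$ estimate is obtained via the Cauchy--Schwarz and extreme-value arguments; within the present proof the only delicate points are bookkeeping. Specifically, I must thread the factor $B_O$ correctly through the homogeneity step, and I must rely on the uniform-convergence nature of Lemma~\ref{lem:generalization_Rademacher} to license applying the bound to the trained, data-dependent $h_S$ rather than to a fixed hypothesis. I anticipate no further obstacle once it is verified that $\mathcal{G}$ maps into $[0,C]$, so that Lemma~\ref{lem:generalization_Rademacher} applies with the stated constant.
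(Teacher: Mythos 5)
Your proposal is correct and follows essentially the same route as the paper: Lemma~\ref{lem:generalization_Rademacher} applied to the composed loss class, Talagrand's lemma (Lemma~\ref{lem:generalization_Rademacher_Lipschitz}) to peel off the $L$-Lipschitz risk, the containment $\mathcal{H}_Q \subseteq \mathcal{H}$ with Lemma~\ref{lem:Rademacher_compare}, and the $\sqrt{1/M}$ estimate of Theorem~\ref{athm:Rademacher_quantum}. If anything, your treatment of $B_O$ is slightly more careful than the paper's, which simply ``incorporates'' the spectral norm at the last step, whereas you justify it explicitly via positive homogeneity of the empirical Rademacher complexity applied to the scaled class $B_O\mathcal{H}_Q$.
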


\begin{proof}
  Let $S_{\mathcal{X}} = \{(\boldsymbol{\alpha}^{(m)})\}_{m=1}^{M}$ be the projection of $S$ over $\mathcal{X}$. When not considering the spectral norm of the observable, the hypothesis set $\mathcal{H}_Q$ generated by the quantum machine learning model and the general hypothesis set $\mathcal{H}$ are defined as in Eq.~\eqref{eq:HQ} and Eq.~\eqref{eq:HH}, respectively. According to Lemma~\ref{lem:generalization_Rademacher} and Lemma~\ref{lem:generalization_Rademacher_Lipschitz}, we have:
  $$
  \begin{aligned}
   R(h_S) \leqslant \widehat{R}_S(h_S) + 2 L \mathfrak{\widehat{R}}_{S_{\mathcal{X}}}(\mathcal{H}_Q)+3C\sqrt{\frac{\log \frac{2}{\delta}}{2 M}}.
  \end{aligned}
  $$

   According to Theorem~\ref{athm:Rademacher_quantum}, the Rademacher complexity of the general hypothesis set $\mathcal{H}$ satisfies $ \widehat{\mathfrak{R}}_{S_{\mathcal{X}}}(\mathcal{H}) \leqslant \sqrt{\frac{1}{M}}$. Since $\mathcal{H}_Q \subseteq \mathcal{H}$, by Lemma~\ref{lem:Rademacher_compare}, we have $ \widehat{\mathfrak{R}}_{S_{\mathcal{X}}}(\mathcal{H}_Q) \leqslant  \widehat{\mathfrak{R}}_{S_{\mathcal{X}}}(\mathcal{H}) \leqslant \sqrt{\frac{1}{M}}$. Incorporating the spectral norm $B_O$ of the observable, the generalization error can be bounded as:
  $$
  \begin{aligned}
  \operatorname{gen}(h_S) = R(h_S) - \widehat{R}_S(h_S) \leqslant 2 L B_O \sqrt{\frac{1}{M}}+3C\sqrt{\frac{\log \frac{2}{\delta}}{2 M}}.
  \end{aligned}
  $$
\end{proof}

\begin{lemma}[Lemma 3.4 in Ref.~\cite{mohri2018foundations}]
  Let $ \mathcal{H} $ be a family of functions taking values in $ \{-1,+1\} $ and let $ \mathcal{G} $ be the family of risk functions associated to $ \mathcal{H} $ for the 0-1 risk: $ \mathcal{G}=\{(x, y) \mapsto \mathbbm{1}(h(x) \neq y): h \in \mathcal{H} \} $. For any datasets $ S= \{(x^{(m)},y^{(m)})\}_{m=1}^{M} $ of elements in $ \mathcal{X} \times\{-1,+1\} $, let $ S_{\mathcal{X}} $ denote its projection over $ \mathcal{X}: S_{\mathcal{X}}= \{(x^{(m)})\}_{m=1}^{M} $. Then, the following relation holds between the empirical Rademacher complexities of $ \mathcal{G} $ and $ \mathcal{H} $ :
$$
\widehat{\mathfrak{R}}_{S}(\mathcal{G})=\frac{1}{2} \widehat{\mathfrak{R}}_{S_{\mathcal{X}}}(\mathcal{H}) .
$$
\end{lemma}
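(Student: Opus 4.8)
The plan is to reduce the Rademacher complexity of the $0$-$1$ loss class $\mathcal{G}$ to that of the underlying hypothesis class $\mathcal{H}$ through a single exact algebraic rewriting of the indicator, followed by a symmetry argument on the Rademacher variables. The starting observation is that for any $h(x), y \in \{-1,+1\}$ the indicator admits the closed form $\mathbbm{1}(h(x) \neq y) = \tfrac{1}{2}(1 - y\, h(x))$; indeed the right-hand side equals $0$ when $h(x)=y$ (so that $y\,h(x)=1$) and equals $1$ when $h(x)\neq y$ (so that $y\,h(x)=-1$). This converts each $g \in \mathcal{G}$ into an affine function of $h$, which is exactly what makes the final relation an equality rather than merely an inequality.

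Substituting this identity into the definition of $\widehat{\mathfrak{R}}_S(\mathcal{G})$, I would expand
\begin{equation*}
\widehat{\mathfrak{R}}_S(\mathcal{G}) = \frac{1}{M}\,\underset{\boldsymbol{\sigma}}{\mathbb{E}}\left[\sup_{h \in \mathcal{H}} \sum_{m=1}^{M} \sigma_m \frac{1 - y^{(m)} h(x^{(m)})}{2}\right].
\end{equation*}
The term $\tfrac{1}{2}\sum_{m} \sigma_m$ is independent of $h$, so it factors out of the supremum; its expectation is $\tfrac{1}{2M}\,\underset{\boldsymbol{\sigma}}{\mathbb{E}}[\sum_{m} \sigma_m] = 0$ since each Rademacher variable has zero mean. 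What remains is $\tfrac{1}{2M}\,\underset{\boldsymbol{\sigma}}{\mathbb{E}}[\sup_{h}\sum_{m} (-\sigma_m y^{(m)})\, h(x^{(m)})]$.

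The key remaining step is to recognize that, because each label $y^{(m)} \in \{-1,+1\}$ is a fixed constant, the map $\sigma_m \mapsto -\sigma_m y^{(m)}$ is a deterministic sign flip that preserves the uniform distribution on $\{-1,+1\}$; by independence across $m$, the vector $(-\sigma_m y^{(m)})_{m=1}^{M}$ has exactly the same joint law as $(\sigma_m)_{m=1}^{M}$. Hence the expectation is unchanged if I relabel $-\sigma_m y^{(m)}$ as a fresh Rademacher vector, which turns the remaining term into precisely $\tfrac{1}{2}\widehat{\mathfrak{R}}_{S_{\mathcal{X}}}(\mathcal{H})$ and yields the claimed equality.

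I expect the main obstacle — really the only delicate point — to be justifying this distributional symmetry cleanly: one must use that the $\sigma_m$ are independent and symmetric and that the $y^{(m)}$ are nonrandom with respect to the inner expectation, so that absorbing the signs $-y^{(m)}$ is a measure-preserving reindexing and therefore leaves the expectation exactly invariant. Everything else is the exact indicator identity and linearity of expectation, both entirely routine.
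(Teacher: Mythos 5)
Your proof is correct, and it is precisely the standard argument for this result: the paper itself states this lemma without proof, citing Lemma~3.4 of Mohri et al., whose proof is exactly your three steps --- the identity $\mathbbm{1}(h(x) \neq y) = \tfrac{1}{2}\bigl(1 - y\,h(x)\bigr)$, dropping the zero-mean term $\tfrac{1}{2}\sum_m \sigma_m$, and absorbing the fixed signs $-y^{(m)}$ into the Rademacher variables by distributional invariance. Nothing is missing; your handling of the symmetry step (independence plus the fact that the $y^{(m)}$ are nonrandom under the inner expectation) is the correct justification.
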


\resetAppendixCounters{C}

\section{Recap of Previous Generalization Bound}
\label{asec:recap}
We will review previous generalization bounds in quantum machine learning and the key insights they attempt to reveal in this section.

\subsection{Effects of Model Complexity on Generalization}
\label{subsec:few_data}

The work by Caro et al.~\cite{caro2022generalization} suggests a fundamental connection between generalization bounds in quantum machine learning and model complexity, specifically quantified by the number of parameterized quantum gates within the quantum circuit. They introduce a per-sample risk function framework (Eq. (1) in their work), which is subject to the spectral norm constraint $\sup_{\rho,y}\|O_{\rho,y}^{\text{loss}}\|_{2} \leqslant B_O$:
\begin{equation}
  \label{eq:C_loss} 
\begin{aligned}
  r(h_S((\rho)),y) = \operatorname{Tr}\left[O_{\rho,y}^{\text{loss}} \mathcal{E}_{\boldsymbol{\theta}^{*}}(\rho)\right],
\end{aligned}
\end{equation}
where $\boldsymbol{\theta}^{*}$ is the optimal parameters of the quantum machine learning model $\mathcal{E}_{\boldsymbol{\theta}}$ learned on the training set $S$.

Their theoretical analysis culminates in a generalization bound (Appendix C, Eq. (C.63) in paper~\cite{caro2022generalization}) that explicitly depends on the circuit complexity:
\begin{equation}
  \label{eq:previous_bound}
\begin{aligned}
  \operatorname{gen}(h_S) \leqslant \frac{24 B_O}{\sqrt{M}} \sqrt{512 T} \cdot\left(\frac{1}{2} \sqrt{\log (6 T)}+\frac{1}{2} \sqrt{\log 2}-\frac{\sqrt{\pi}}{2} \operatorname{erf}(\sqrt{\log 2})-\frac{\sqrt{\pi}}{2}\right)+3 B_O \sqrt{\frac{2 \log (2 / \delta)}{M}}.
\end{aligned}
\end{equation}
Here, $M$ denotes the sample size of training set, $T$ represents the number of parameterized quantum gates, and $\operatorname{erf}(x)=\frac{2}{\sqrt{\pi}} \int_{0}^{x} \exp \left(-t^{2}\right) \mathrm{d} t$ is the error function. This result suggests a scaling relationship $\operatorname{gen}(h_S) \in \mathcal{\tilde{O} }\left( \sqrt{\frac{T}{M}} \right)$ (where $\mathcal{\tilde{O} }$ suppresses logarithmic factors), leading to the conclusion that increasing model complexity through more parameterized quantum gates inherently deteriorates generalization capability.

For a fair comparison, we convert the setting from paper~\cite{caro2022generalization} to an equivalent formulation in our framework. This is clearly feasible since our setting is more general than that work, and we compare them on the binary classification problem. We set $O_{\rho,y}^{\text{loss}} = I - \ket{y}\bra{y}$, that is, $O_{\rho,y}^{\text{loss}} = I-\ket{1}\bra{1} = \ket{0}\bra{0}$ when $y = 1$, and $O_{\rho,y}^{\text{loss}} = I-\ket{0}\bra{0} = \ket{1}\bra{1}$ when $y = -1$. We define $\rho_y = \ket{1}\bra{1}$ when $y = 1$ and $\rho_y = \ket{0}\bra{0}$ when $y = -1$. The risk function in Eq.~\eqref{eq:C_loss} can then be simplified to:
\begin{equation}
  \label{eq:C_loss_equivalent}
\begin{aligned}
  r(h_S((\rho)),y) = 1 - \operatorname{Tr}\left[\rho_y\mathcal{E}_{\boldsymbol{\theta}^{*}}((\rho))\right].
\end{aligned}
\end{equation}
Since this risk function has an upper bound $C=1$, Lipschitz coefficient $L=1$, and $B_O = 1$, according to our results, the generalization bound using this risk function is:
\begin{equation}
  \label{eq:our_bound}
\begin{aligned}
  \operatorname{gen}(h_S) \leqslant \frac{2}{\sqrt{M}} + 3\sqrt{\frac{\log (2 / \delta)}{2M}}.
\end{aligned}
\end{equation}

This generalization bound in paper~\cite{caro2022generalization} suggests that as model complexity increases with more parameterized quantum gates, the generalization error grows at least at a square root rate with respect to the number of parameterized quantum gates, implying that more gates lead to worse generalization. However, as demonstrated in Fig.~\ref{fig:gen_bound_compare}.(b), our experimental results verify that the generalization error does not increase with model complexity, and our proposed generalization bound is both tighter and more meaningful. Furthermore, for the risk function in Eq.~\eqref{eq:C_loss} or its equivalent form in Eq.~\eqref{eq:C_loss_equivalent}, the maximum possible generalization error is 1, while this generalization bound consistently exceeds the maximum possible generalization error.

\subsection{Effects of Encoding Method and Data Dimension on Generalization}

In paper~\cite{caro2021encodingdependent}, the authors argue that the generalization bound for quantum machine learning is related to the encoding method and data dimension. This paper~\cite{caro2021encodingdependent} also considers a general per-sample risk function $r$ as our work and points out that when the encoding gate is $k$-local, i.e., $U(x) = e^{-i x H_{k}}$ where the Hamiltonian $H$ has dimension $2^k$ and has as many distinct eigenvalues as possible (e.g., $2^k$ eigenvalues), specifically when $H$ has eigenvalues $1,3,9,\cdots,3^{k}-1$, according to Equations (100), (111), and (115) in \cite{caro2021encodingdependent}, the generalization bound is:
\begin{equation}
  \label{eq:gen_bound_encoding}
\begin{aligned}
  \operatorname{gen}(h_S) &\leqslant \frac{12LB_O}{\sqrt{m}} \sqrt{\left( \frac{2^{k}(2^{k} -1)}{2} + 1 \right)^{d}}\left[ \sqrt{\log \left(3 \cdot 2(2 \pi)^{\frac{d}{2}} \right)+\frac{1}{2} \log \left( \left( \frac{2^{k}(2^{k} -1)}{2} + 1 \right)^{d} \right)}  +\int_{0}^{\gamma_{0}} \sqrt{\log \left(\frac{2}{\beta}\right)} \mathrm{d} \beta\right] \\
  &  + 3C \sqrt{\frac{\log (2 / \delta)}{2M}},
\end{aligned}
\end{equation}
where $L$ is the Lipschitz coefficient of the per-sample risk function $r$, $B_O$ is the upper bound of the spectral norm of the observable, $d$ is the data dimension, and $k$ is the number of qubits on which the encoding gate acts.

This generalization bound suggests that when using specific encoding strategies, the generalization error grows exponentially with data dimensionality. However, we will experimentally verify in Subsection~\ref{asubsec:gen_bound_encoding} that the generalization bound does not exhibit exponential growth, and our generalization bound is relatively tighter, while this bound consistently exceeds the maximum possible generalization error.

\subsection{Effects of Optimization Process on Generalization}
\label{subsec:gen_bound_optimization}

Furthermore, paper~\cite{yang2025stability} examines the relationship between generalization bounds and stability in quantum machine learning. According to Eq.(1) and Corollary 4.3 in~\cite{yang2025stability}, when the per-sample risk function is $L$-Lipschitz and its gradient is $v_L$-Lipschitz, the generalization bound for quantum machine learning optimized using SGD with a fixed learning rate $\eta$ satisfies:
\begin{equation}
  \label{eq:yang_bound}
\begin{aligned}
  \operatorname{gen}(h_S) \leqslant \frac{2\sqrt{2}L^2K B_O}{(LK B_O + \sqrt{2} v_L K B_O)M}(1+\eta(LKB_O + \sqrt{2}v_L K B_O))^{T},
\end{aligned}
\end{equation}
where $K$ is the number of parameterized gates in the quantum machine learning model, $B_O$ is the upper bound of the spectral norm of the observable, and $T$ is the number of training epochs.

This generalization bound suggests that as the number of training epochs increases, the generalization error grows exponentially with the number of training epochs. We experimentally refute this viewpoint in Subsection~\ref{subsec:gen_bound_epoch}, and this generalization bound is always larger than the maximum possible generalization error.

\resetAppendixCounters{D}

\section{Experiments Details}
\label{asec:experiments_details}
In this section, we provide the implementation details for the three figures presented in the main paper.

\subsection{Experiment on Quantum Phase Classification}
\label{subsec:gen_bound_classification}

For the experiments in Fig.~\ref{fig:gen_bound_classification}, we construct our dataset using ground states of a 6-qubit ANNNI model across varying parameter combinations $(\kappa,h)$. The dataset consists of quantum states $\rho_i$ paired with binary phase labels $y_i \in \{-1,1\}$, where $y_i = 1$ indicates ordered phase and $y_i = -1$ indicates disordered phase. To cover the entire phase diagram, we randomly select parameter pairs $(\kappa, h)$ uniformly from the plane shown in Fig.~\ref{fig:experiments}.(a). Each quantum state $\rho_i$ represents the numerically computed ground state of the corresponding ANNNI Hamiltonian. The quantum machine learning model employs the parameterized quantum circuit architecture illustrated in Figure~\ref{fig:experiments}.(b) with $L=20$ layers, where circuit parameters $\boldsymbol{\theta}$ are initialized from a standard Gaussian distribution. Predictions are obtained through measurement of the $Z_1$ observable on the first qubit. For training on dataset $S = \{(\rho^{(m)},y^{(m)})\}_{m=1}^{M}$, we employ the Hinge loss function:
\begin{equation}
  \label{eq:hinge_loss}
\begin{aligned}
\mathcal{L}(\boldsymbol{\theta};S) = \frac{1}{M} \sum_{m=1}^{M}  \max \left\{0, 1-y^{(m)} \operatorname{Tr}\left[ Z_{1} U(\boldsymbol{\theta}) \rho^{(m)} U(\boldsymbol{\theta})^{\dagger} \right] \right\}.
\end{aligned}
\end{equation}
We used the Adam optimizer with a learning rate of 0.005 and trained for 100 epochs. To compare the effect of different sample sizes on generalization capability, we selected training set sizes of $\{10,500,1000,1500,2000\}$ with a batch size of 200 (when the total training set size is smaller than the batch size, this becomes full batch training). Since the randomness originates from the training set sampling, for each sample size, we independently sampled the training set 10 times while maintaining identical initial circuit parameters. To ensure that the test error approximates the prediction error as closely as possible, we used a test set of 10,000 samples that were completely unseen during training, with the same test set being used across all training configurations (different training set sizes and different sampling iterations). When calculating the generalization error, we used the 0-1 risk as our per-sample risk function $r$, where the training and test accuracies are exactly 1 minus the training and test errors, respectively.

\subsection{Experiment on Comparison of Generalization Bounds}
\label{subsec:gen_bound_compare}

In the experiments shown in Fig.~\ref{fig:gen_bound_compare}, we systematically compare the generalization bound proposed in paper~\cite{caro2022generalization} with the generalization bound proposed in our paper. The experiments are divided into two parts: first, with fixed model complexity, we compare the impact of different training sample sizes on generalization capability; second, with fixed training sample size ($M=2000$), we examine the effect of different model complexities (achieved by adjusting the number of quantum circuit layers, where increasing layers corresponds to increasing the number of parameterized quantum gates) on generalization capability. The results are shown in Fig.~\ref{fig:gen_bound_compare}.(a) and (b), respectively.

Since the generalization bound in paper~\cite{caro2022generalization} is designed for a specific risk function (discussed in Subsection~\ref{subsec:gen_bound_classification}), we adopt the per-sample risk function defined in Eq.~\eqref{eq:C_loss_equivalent}, where $\rho_1 = \ket{0}\bra{0}$ and $\rho_{-1} = \ket{1}\bra{1}$, and $\boldsymbol{\theta}^{*}$ is the optimal parameters of the quantum machine learning model learned on the training set $S$. That is, in Fig.~\ref{fig:gen_bound_compare}, the generalization error is calculated using the risk function in Eq.~\eqref{eq:C_loss_equivalent}. We use the following loss function to learn the dataset $S = \{(\rho^{(m)}, y^{(m)})\}_{m=1}^{M}$:
$$
\mathcal{L}(\boldsymbol{\theta}; S) = \frac{1}{M} \sum_{m=1}^{M} \left(1 - \operatorname{Tr}\left[\rho_{y^{(m)}} U(\boldsymbol{\theta}) \rho^{(m)} U(\boldsymbol{\theta})^{\dagger}\right]\right).
$$
Both generalization bounds (Eq.~\eqref{eq:previous_bound} and Eq.~\eqref{eq:our_bound}) adopt a confidence level of $1 - \delta = 0.9$.

In Fig.~\ref{fig:gen_bound_compare}.(a), we directly use the prediction results from Fig.~\ref{fig:gen_bound_classification}, and change the risk function used for evaluating generalization error from the 0-1 risk function in Fig.~\ref{fig:gen_bound_classification} to the risk function in Eq.~\eqref{eq:C_loss_equivalent}. While in Fig.~\ref{fig:gen_bound_compare}.(b), to explore the influence of model parameters, we fix the training set and independently sample 10 different sets of initial parameters from a standard Gaussian distribution for experiments. In Fig.~\ref{fig:gen_bound_compare}.(b), the risk function used for evaluating generalization error is also Eq.~\eqref{eq:C_loss_equivalent}. The optimizer, learning rate, and batch size settings are all consistent with the experiments in Fig.~\ref{fig:gen_bound_classification}.

Furthermore, for classification tasks, the more commonly used risk function is the 0-1 risk, i.e., $r(h_S(\rho),y) = \mathbbm{1}(h_S(\rho) \neq y)$, which is 1 minus the 0-1 risk, namely the accuracy. It is worth noting that for the same problem, choosing different risk functions will result in different generalization bounds. For the 0-1 risk function, the generalization bound is $ 1 / \sqrt{M} + 3\sqrt{ \log (2 / \delta) / 2M}$, while for the risk function adopted in paper~\cite{caro2022generalization}, i.e., Eq.~\eqref{eq:C_loss_equivalent}, the generalization bound is $ 2 / \sqrt{M} + 3\sqrt{ \log (2 / \delta) / 2M}$. In Fig.~\ref{fig:gen_bound_layer}.(a)(b), we present the training and test results for the binary classification task from Fig.~\ref{fig:gen_bound_compare}.(b) using accuracy as the evaluation metric, along with the generalization bounds. Additionally, we supplement the training and test results using Eq.~\eqref{eq:C_loss_equivalent} as the evaluation metric in Fig.~\ref{fig:gen_bound_layer}.(c). For comparison, we show the generalization error and generalization bound when using Eq.~\eqref{eq:C_loss_equivalent} as the risk function in Fig.~\ref{fig:gen_bound_layer}.(d), compared to the results in Fig.~\ref{fig:gen_bound_layer}.(b).

Relative to the 0-1 risk, when using Eq.~\eqref{eq:C_loss_equivalent} as the risk function, the generalization bound is larger and the generalization error is also larger. This is mainly because the 0-1 risk only evaluates correctness, while Eq.~\eqref{eq:C_loss_equivalent} not only evaluates correctness but also measures the gap between predicted and true values. Additionally, subfigures (a) and (c) in Fig.~\ref{fig:gen_bound_layer} appear to exhibit a counterintuitive phenomenon: as the number of layers increases, both training and test accuracy improve, while the training and test errors measured using Eq.~\eqref{eq:C_loss_equivalent} also increase. This occurs because when the number of layers increases, on average, although the predictions deviate further from the true values, the number of correctly classified samples actually increases.

\begin{figure}[htpb]
  \centering
  \includegraphics[width=0.8\textwidth]{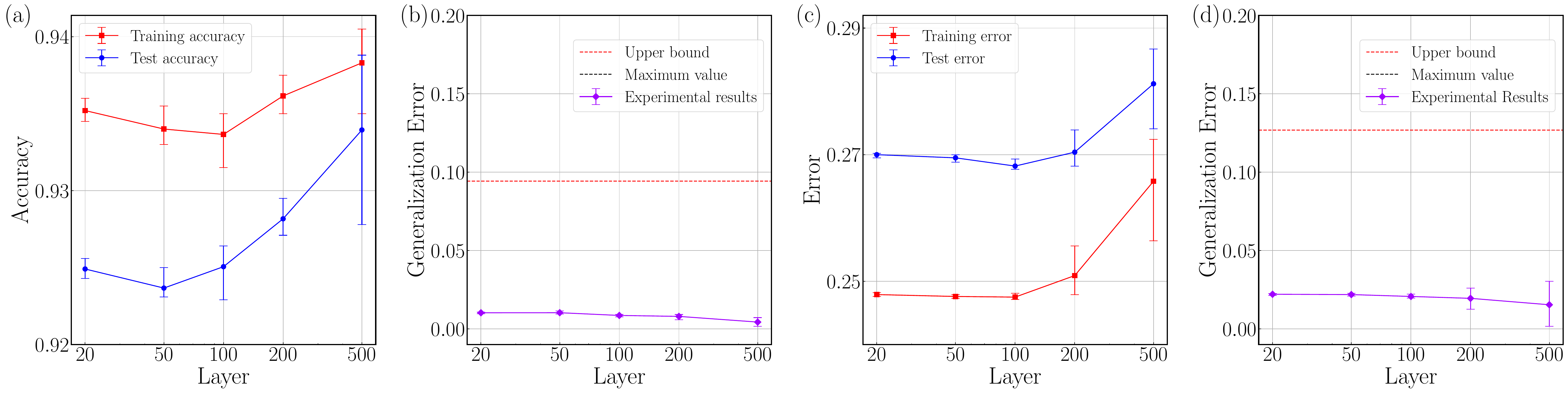}
  \caption{(a) Training accuracy and test accuracy in the experiment from Fig.~\ref{fig:gen_bound_compare}.(b). (b) Comparison between experimental generalization error and theoretical generalization bound with confidence level $1-\delta=0.9$, when using 0-1 risk function as the risk function and fixed sample size $M=2000$. (c) Training error and test error in the experiment from Fig.~\ref{fig:gen_bound_compare}.(b) measured using Eq.~\eqref{eq:C_loss_equivalent}. (d) Comparison between experimental generalization error and theoretical generalization bound with confidence level $1-\delta=0.9$, when using Eq.~\eqref{eq:C_loss_equivalent} as the risk function and fixed sample size $M=2000$. The error bars represent the minimum and maximum values across 10 independent runs with different random seeds, with the central line showing the mean value.}
  \label{fig:gen_bound_layer}
\end{figure}

\subsection{Experiment on Random Label}

In Fig.~\ref{fig:random_label}, when creating datasets with random labels, for different sample sizes, we first sampled 10 training sets, then randomly assigned labels $\{-1,1\}$ to each quantum state in the dataset according to a uniform distribution for each training set. Each experiment used the same test set, and all other settings remained identical to those in the experiment described in Fig.~\ref{fig:gen_bound_classification}. Here, we used the 0-1 risk as the per-sample risk function.

\resetAppendixCounters{E}

\section{Generalization in Regression Task}
\label{asec:regression_experiments}

This section examines the application of our proposed generalization upper bound to quantum machine learning regression models. We focus on evaluating the effects of data dimensionality (the number of qubits) and data encoding methods on generalization capability, and compare with the generalization bound for the special encoding method from paper~\cite{caro2021encodingdependent} discussed in Subsection~\ref{asubsec:gen_bound_encoding}.

\subsection{Regression Experiments}
\label{asubsec:regression_experiments}

Furthermore, we conducted experiments on regression problems. We chose the target function for regression as $f(\boldsymbol{x}) = 1 - \boldsymbol{x}^{\top} \boldsymbol{x} /d$, where $\boldsymbol{x} \in \mathbb{R}^{d}$ and $d$ is the vector dimension. When each dimension of the data $\boldsymbol{x}$ is uniformly sampled from $[-1,1]$, we have $f(\boldsymbol{x}) \in [0,1]$. We used angle encoding through quantum gates $R_y(x) = e^{-i x Y}$, where $Y$ is the Pauli $Y$ matrix, with $N$ qubits encoding $N$-dimensional data using the circuit architecture shown in Fig.~\ref{fig:regression_circuit}.(a). Let $\rho(\boldsymbol{x}_i)$ represent the quantum state corresponding to data $\boldsymbol{x}_i$ after angle encoding, and $y_i$ be the label $f(\boldsymbol{x}_i)$ corresponding to data $\boldsymbol{x}_i$. For the $N$-dimensional function regression task, we used an $N$-qubit circuit with the observable $O_Z = Z_{1} \otimes \cdots \otimes Z_N$. In this scenario, our proposed generalization bound is shown in Eq.~\eqref{eq:regression_bound}.

\begin{figure}[htpb]
  \centering
  \includegraphics[width=0.75\textwidth]{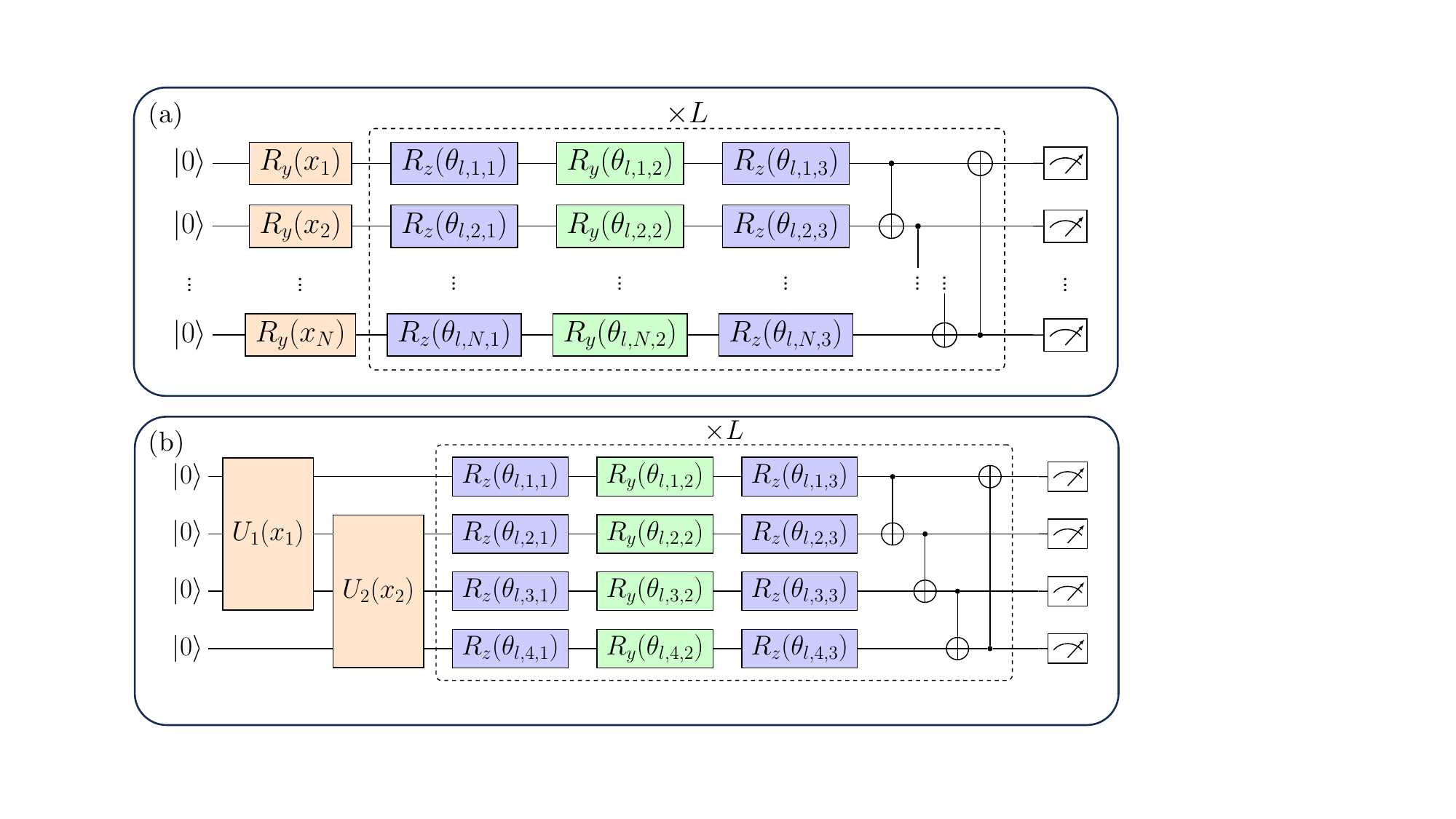}
  \caption{(a) The quantum circuit first encodes $N$-dimensional data via $R_y(x_i)$ angle encoding, then applies an $L$-layer parameterized circuit. Each layer consists of $R_z(\theta_1) R_y(\theta_2) R_z(\theta_3)$ rotations and ring-topology CNOT gates for entanglement. (b) Same variational architecture as (a) but with the special encoding method, where each encoding gate $U_i(x_i) = e^{-i x_i H_i}$ has a different Hamiltonian $H_i$, with $H_i = \operatorname{diag}\left( (i+2),2(i+2), \cdots,2^3(i+2) \right)$ containing $2^3$ distinct eigenvalues.}
  \label{fig:regression_circuit}
\end{figure}

For training on the dataset $S = \{(\rho(\boldsymbol{x}^{(m)}),y^{(m)})\}_{m=1}^{M}$, we used the mean squared error (MSE) loss function:
$$
\begin{aligned}
\mathcal{L}(\boldsymbol{\theta};S) = \frac{1}{M} \sum_{m=1}^{M} \left(y^{(m)} - \operatorname{Tr}\left[O_Z U(\boldsymbol{\theta}) \rho(\boldsymbol{x}^{(m)}) U(\boldsymbol{\theta})^{\dagger} \right] \right)^2.
\end{aligned}
$$
When evaluating the regression error, we used the mean absolute error as our per-sample risk function, i.e., $r(h_S(x),y) = |y-h_S(x)|$. All other experimental settings were the same as in the experiment described in Fig.~\ref{fig:gen_bound_classification}. In this experiment, we used 6 qubits to regress a 6-dimensional function with $L = 20$, fixed the initial parameter distribution, and randomly sampled 10 different datasets. The experimental results for the regression task are shown in Fig.~\ref{fig:regression_results}.

\begin{figure}[htpb]
  \centering
  \includegraphics[width=0.6\textwidth]{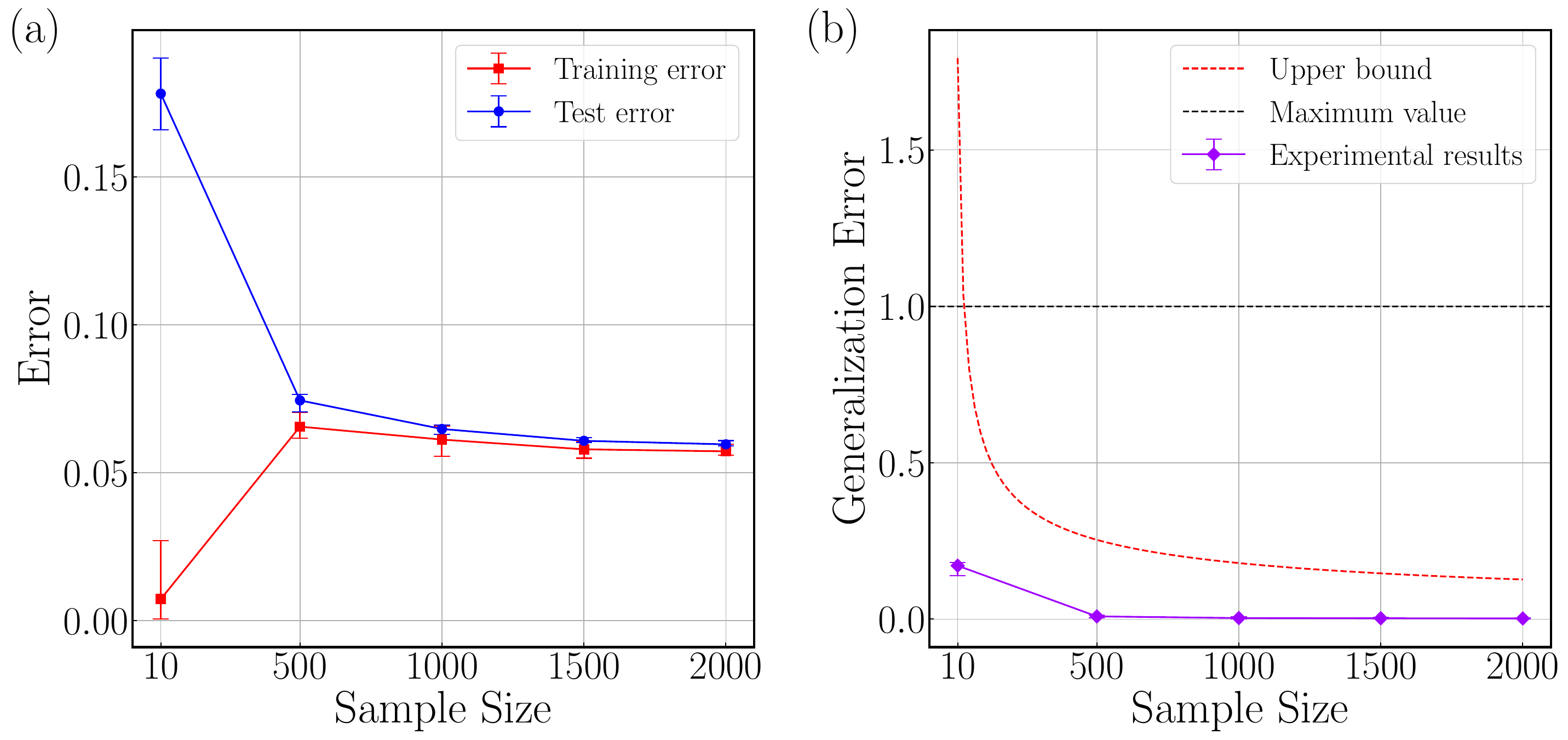}
  \caption{(a) Training error and test error under different sample sizes for regression tasks; (b) Comparison between experimental generalization error and theoretical generalization upper bound with confidence level $1-\delta=0.9$. The error bars represent the minimum and maximum values across 10 independent runs with different training sets, with the central line showing the mean value.}
  \label{fig:regression_results}
\end{figure}

\subsection{Effects of Data Dimension (or Qubit Number) on Generalization}
\begin{figure}[htpb]
  \centering
  \includegraphics[width=0.75\textwidth]{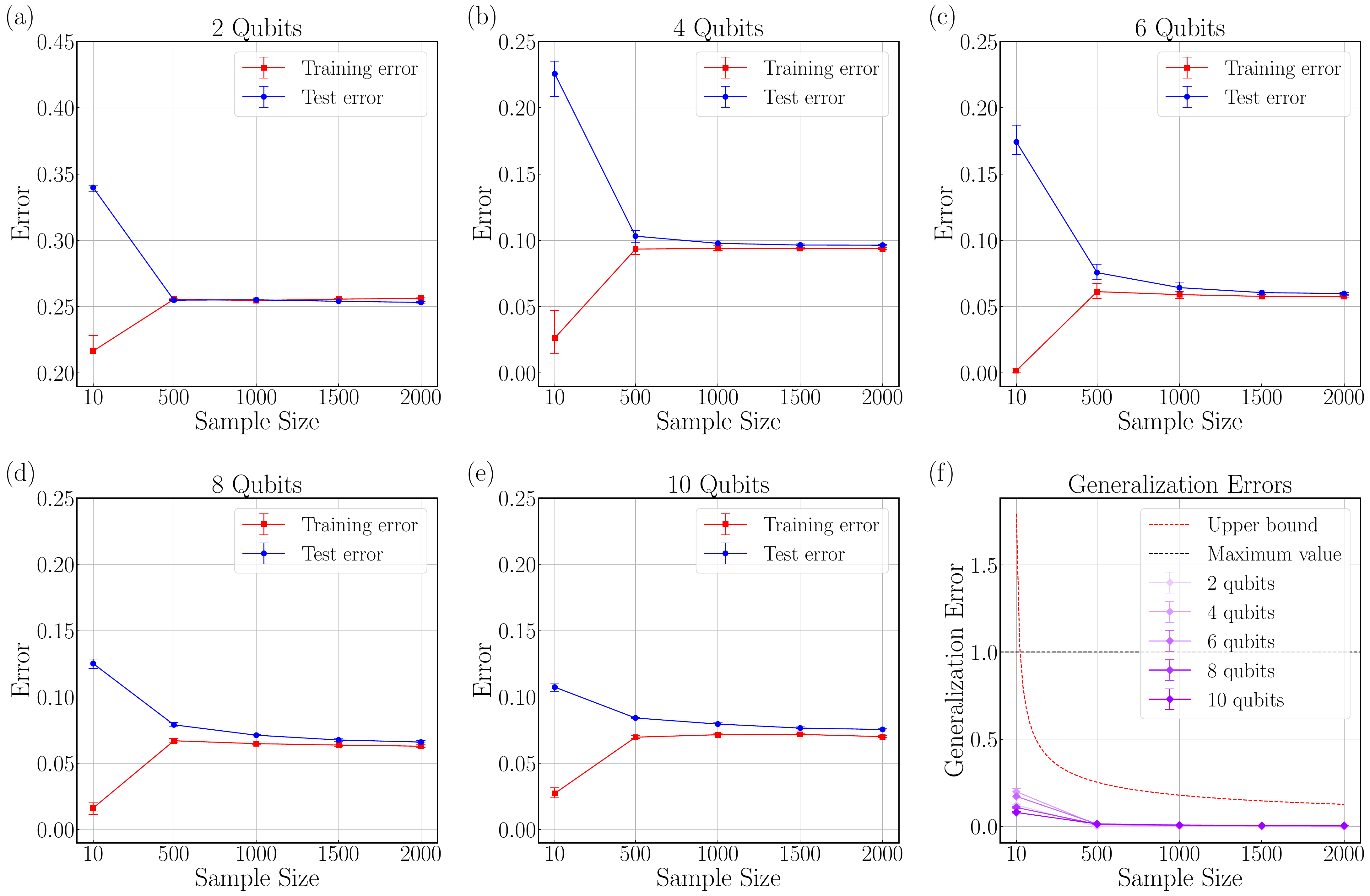}
  \caption{(a)-(e) Training error and test error for regression tasks under different numbers of qubits (different data dimensions). (f) Comparison between experimental generalization error and our proposed generalization upper bound with confidence level $1-\delta=0.9$ for different numbers of qubits. The error bars represent the minimum and maximum values across 10 independent runs with different random seeds, with the central line showing the mean value.}
  \label{fig:regression_qubit}
\end{figure}
To investigate the impact of data dimensionality on generalization ability in quantum machine learning, we conducted regression experiments using quantum circuits with varying numbers of qubits. Since the regression circuit architecture in Fig.~\ref{fig:regression_circuit} requires one qubit per data dimension, we tested systems with $2, 4, 6, 8, 10$ qubits to regress functions of corresponding dimensions $d = 2, 4, 6, 8, 10$. For each configuration, we fixed the dataset and performed 10 independent runs with different randomly initialized parameters sampled from a standard Gaussian distribution. All other experimental settings remained consistent with those described in Fig.~\ref{fig:regression_results}.

The results are shown in Fig.~\ref{fig:regression_qubit}. It can be observed that as the number of qubits increases, or equivalently as the data dimensionality increases, the experimental generalization error of the quantum machine learning model remains at a stable level, and even shows a decreasing trend when the number of qubits is larger.

\begin{figure}[htpb]
  \centering
  \includegraphics[width=0.75\textwidth]{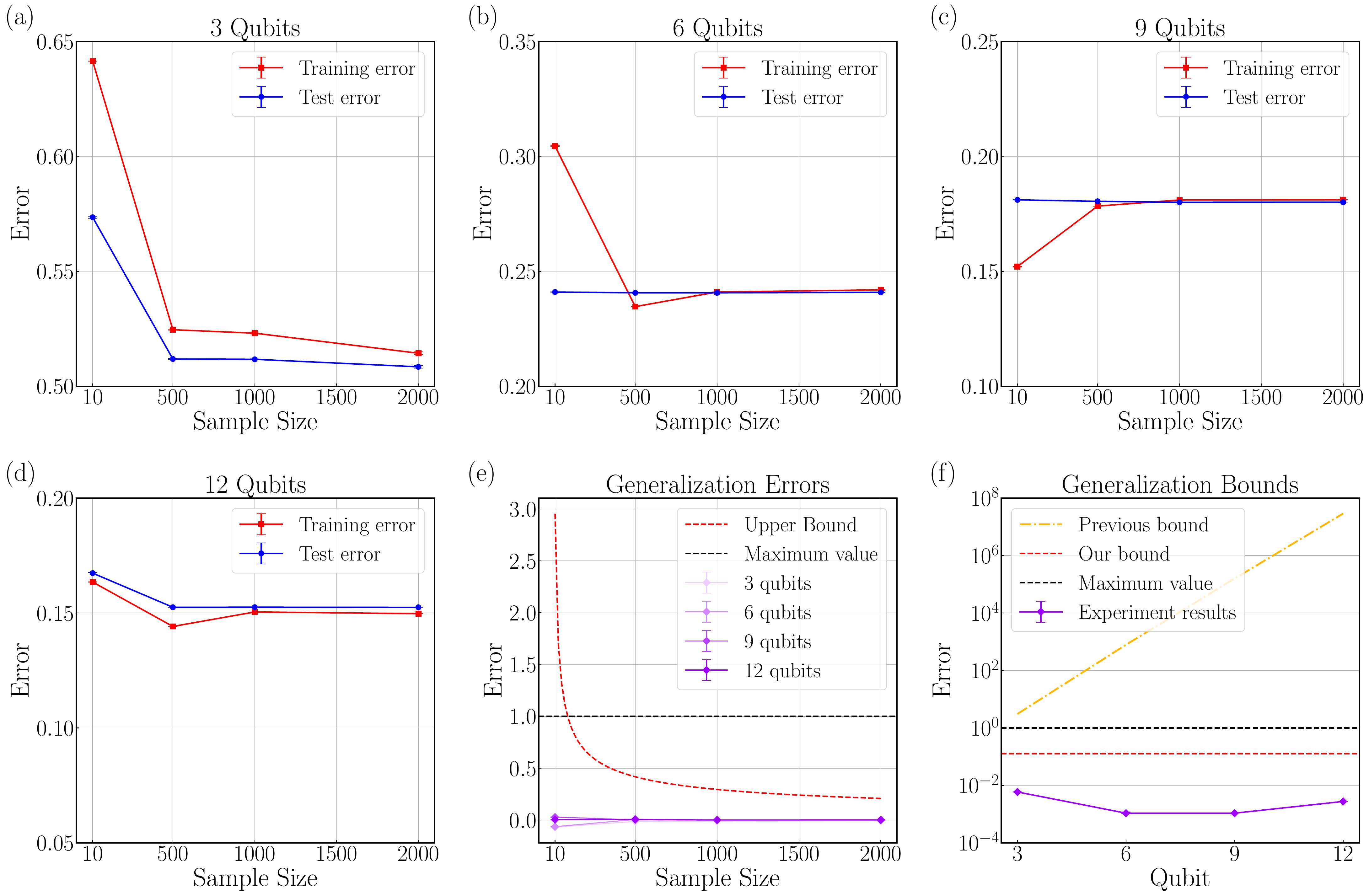}
  \caption{(a)-(d) Training error and test error for regression tasks under different numbers of qubits for special encoding method described in Fig.~\ref{fig:regression_circuit}.(b). (e) Comparison between experimental generalization error and our theoretical generalization upper bound with confidence level $1-\delta=0.9$ for different data dimensions. (f) Comparison between our theoretical upper bound and the bound from paper~\cite{caro2021encodingdependent} as data dimensionality varies with fixed sample size. Both bounds are shown with confidence $1-\delta = 0.9$. The upper bound proposed in~\cite{caro2021encodingdependent} is always larger than the maximum possible generalization error. The error bars represent the minimum and maximum values across 10 independent runs with different random seeds, with the central line showing the mean value.}
  \label{fig:regression_special_qubit}
\end{figure}

\subsection{Effects of Data Encoding on Generalization}
\label{asubsec:gen_bound_encoding}

Furthermore, paper~\cite{caro2021encodingdependent} indicates that when using encoding of the form $U = e^{\mathrm{i} x H}$, if $H$ has an exponential number of distinct eigenvalues, the generalization bound also increases exponentially with data dimensionality. Here, we used 3-qubit encoding gates, applying different encoding gates $U_i(x_i) = e^{\mathrm{i} x_i H_i}$ for each data point $x_i$, where $H_i = \operatorname{diag}\left( (i+2),2(i+2), 3(i+2), \cdots,2^3(i+2) \right)$, containing $2^3$ distinct eigenvalues. The circuit is shown in Fig.~\ref{fig:regression_circuit}.(b), demonstrating that $N$ qubits can encode $d=N-2$ dimensional data. We encoded regression data of dimensions $1,4,7,10$ using $3,6,9,12$ qubits respectively. For each configuration, we fixed the dataset and performed 10 independent runs with different randomly initialized parameters sampled from a standard Gaussian distribution. All other experimental settings remained consistent with those described in Fig.~\ref{fig:regression_results}. Finally, we fixed the sample size at $2000$ and examined whether the generalization error would increase exponentially with data dimensionality by checking the generalization errors for different numbers of qubits (different data dimensions).

As shown in Fig.~\ref{fig:regression_special_qubit}, the experimental generalization error does not increase exponentially, remains below our upper bound, and aligns more closely with our upper bound. Furthermore, in regression problems, since $f(\boldsymbol{x}) \in [0,1]$, the maximum generalization error is $1$, while the generalization bound proposed in~\cite{caro2021encodingdependent}, namely Eq.~\eqref{eq:gen_bound_encoding}, is always larger than the maximum possible generalization error. It is worth noting that in the numerical calculation of the generalization bound Eq.~\eqref{eq:gen_bound_encoding}, we omitted the $\int_{0}^{\gamma_{0}} \sqrt{\log \left(2 / \beta\right)} \mathrm{d} \beta$ term for computational convenience.

\resetAppendixCounters{F}

\section{Effects of Optimization Process on Generalization}
\label{asec:optimization}
Additionally, we experimentally verified the impact of batch Size, epochs, learning rate, and optimizers on generalization capability during the optimization process.

\subsection{Effects of Batch Size on Generalization}

When studying the effect of batch size on generalization capability, we used the same dataset in the main paper. We selected a training set sample size of $M=2000$ and batch sizes of $\{1, 200, 500, 1000, 2000\}$. In this experiment, we fixed the dataset and ran 10 times with different initial parameters sampled from a standard Gaussian distribution. Other experimental settings remained consistent with those described in Fig.~\ref{fig:gen_bound_classification}. The experimental results are shown in the Fig.~\ref{fig:batch_size}.
\begin{figure}[htpb]
  \centering
  \includegraphics[width=0.63\textwidth]{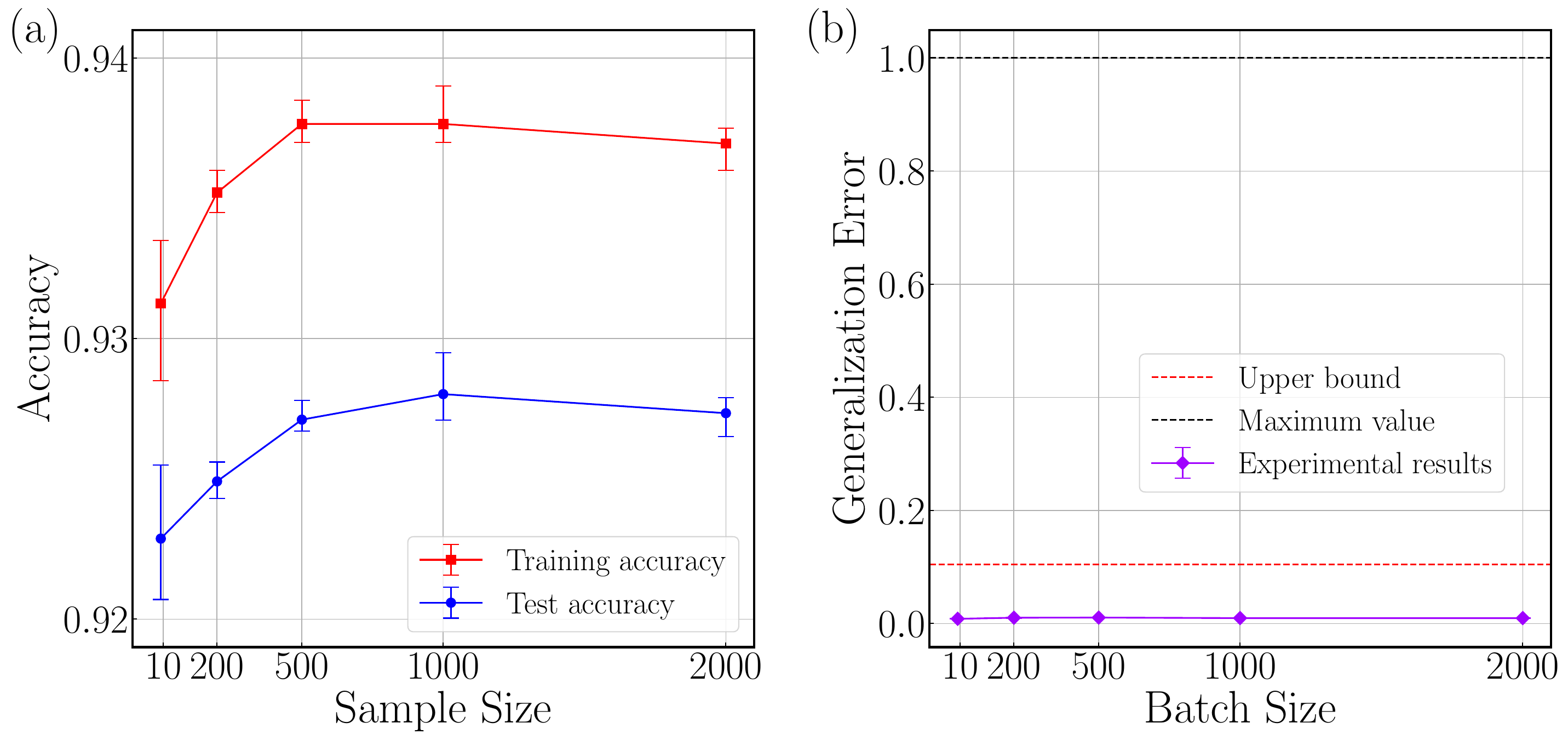}
  \caption{(a) Training accuracy and test accuracy under different batch sizes; (b) Comparison between experimental generalization error and theoretical generalization upper bound with confidence level $1-\delta=0.9$ and fixed sample size $M=2000$. The error bars represent the minimum and maximum values across 10 independent runs with different random seeds, with the central line showing the mean value.}
  \label{fig:batch_size}
\end{figure}

\subsection{Effects of Epoch on Generalization}
\label{subsec:gen_bound_epoch}

\begin{figure}[htpb]
  \centering
  \includegraphics[width=0.63\textwidth]{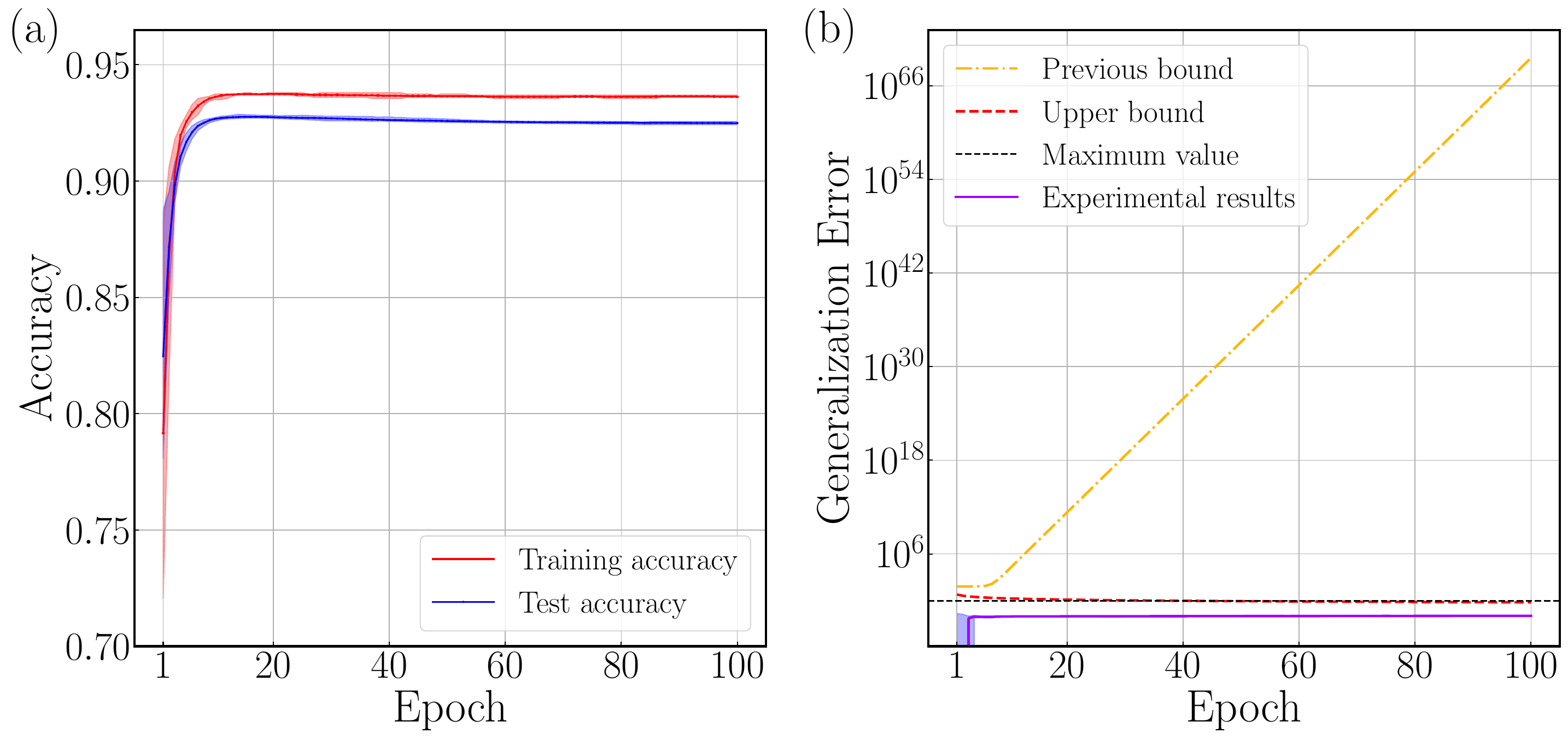}
  \caption{(a) Training accuracy and test accuracy under different number of epochs; (b) Comparison between experimental generalization error and theoretical generalization upper bound  with confidence level $1-\delta=0.9$ and fixed sample size $M=2000$. The shaded area represents the minimum and maximum values across 10 independent runs with different random seeds, with the central line showing the mean value.}
  \label{fig:epoch_bound}
\end{figure}
Furthermore, we experimentally investigated the impact of the number of training epochs on generalization capability and compare our proposed generalization upper bound with the bound from~\cite{yang2025stability} discussed in the Subsection~\ref{subsec:gen_bound_optimization}. To be consistent with the theoretical assumptions of bound in Eq.~\eqref{eq:yang_bound}, we chose SGD as the optimizer with a learning rate of 0.005, fixed the dataset with a sample size of $2000$, and ran 10 times with different initial parameters sampled from a standard Gaussian distribution. We displayed the training accuracy, test accuracy, and generalization error for each epoch across the 10 experiments and compared them with the theoretical results from~\cite{yang2025stability}. The experimental results are shown in the Fig.~\ref{fig:epoch_bound}:

Since the generalization bound in Eq.~\eqref{eq:yang_bound} grows exponentially with the number of epochs, this bound is much larger than the actual generalization error, and even larger than the maximum possible generalization error.

\subsection{Effects of Learning Rate on Generalization}

We investigated the impact of learning rate on experimental generalization capability. In this experiment, we fixed the dataset and ran 10 times with different initial parameters sampled from a standard Gaussian distribution. Learning rates were sampled from $\{0.0005, 0.005, 0.05, 0.5, 5\}$, with all other settings identical to those in the experiment described in Fig.~\ref{fig:gen_bound_classification}. The experimental results are shown in the Fig.~\ref{fig:lr_bound}:

\begin{figure}[htpb]
  \centering
  \includegraphics[width=0.8\textwidth]{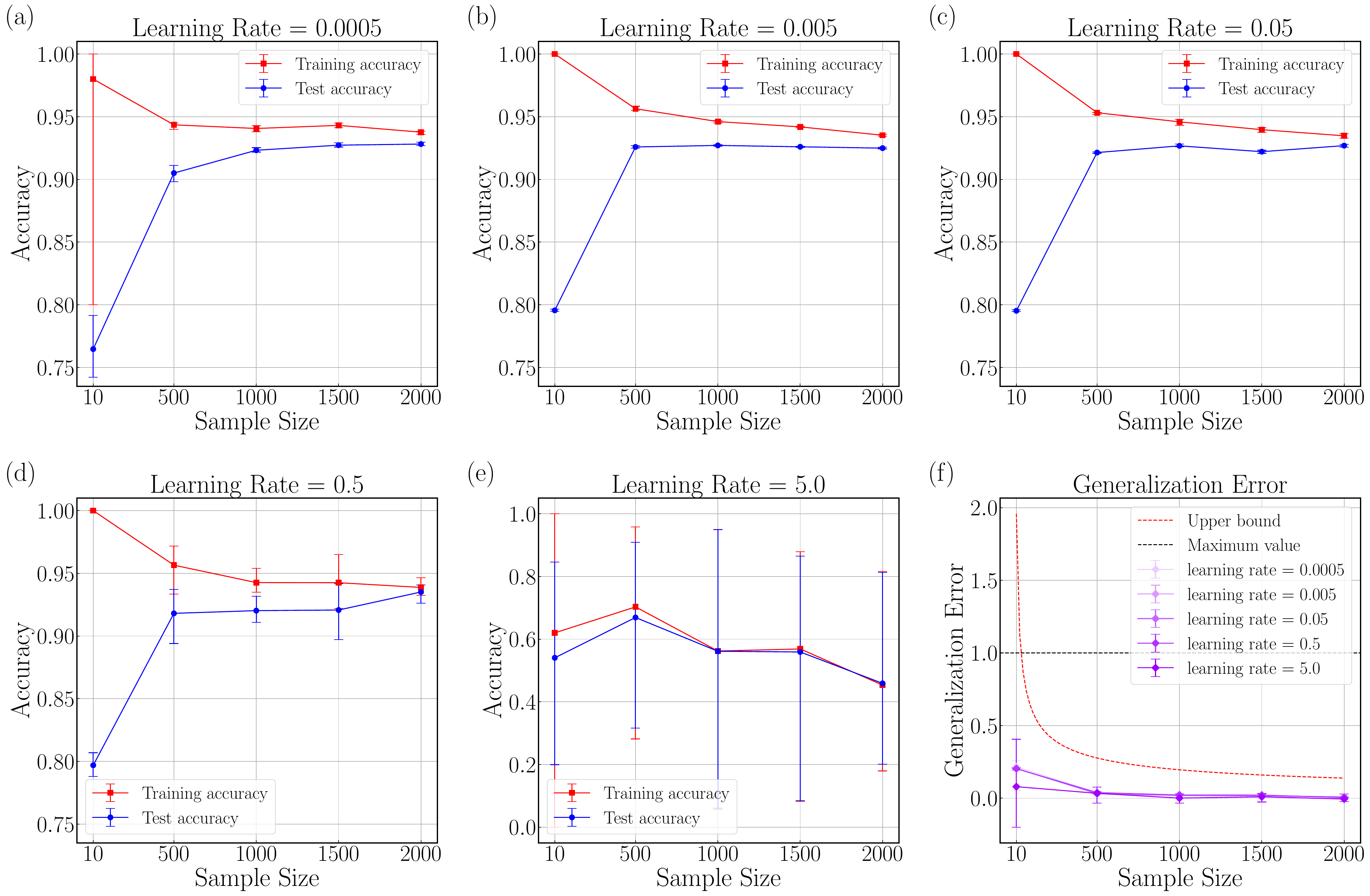}
  \caption{(a)-(e) Training accuracy and test accuracy under different learning rates; (f) Comparison between experimental generalization error and theoretical generalization upper bound with confidence level $1-\delta=0.9$. The error bars represent the minimum and maximum values across 10 independent runs with different random seeds, with the central line showing the mean value.}
  \label{fig:lr_bound}
\end{figure}

\subsection{Effects of Optimizer on Generalization}

We investigated the impact of commonly used optimizers on generalization capability. In this experiment, we fixed the dataset and ran 10 times with different initial parameters sampled from a standard Gaussian distribution. Optimizers were selected from \{\text{SGD}, \text{Adam}, \text{RMSprop}, \text{AdaGrad}, \text{Lion}\}, with all other settings identical to those in the experiment described in Fig.~\ref{fig:gen_bound_classification}. The experimental results are shown in the Fig.~\ref{fig:optimizer_bound}:

\begin{figure}[htpb]
  \centering
  \includegraphics[width=0.8\textwidth]{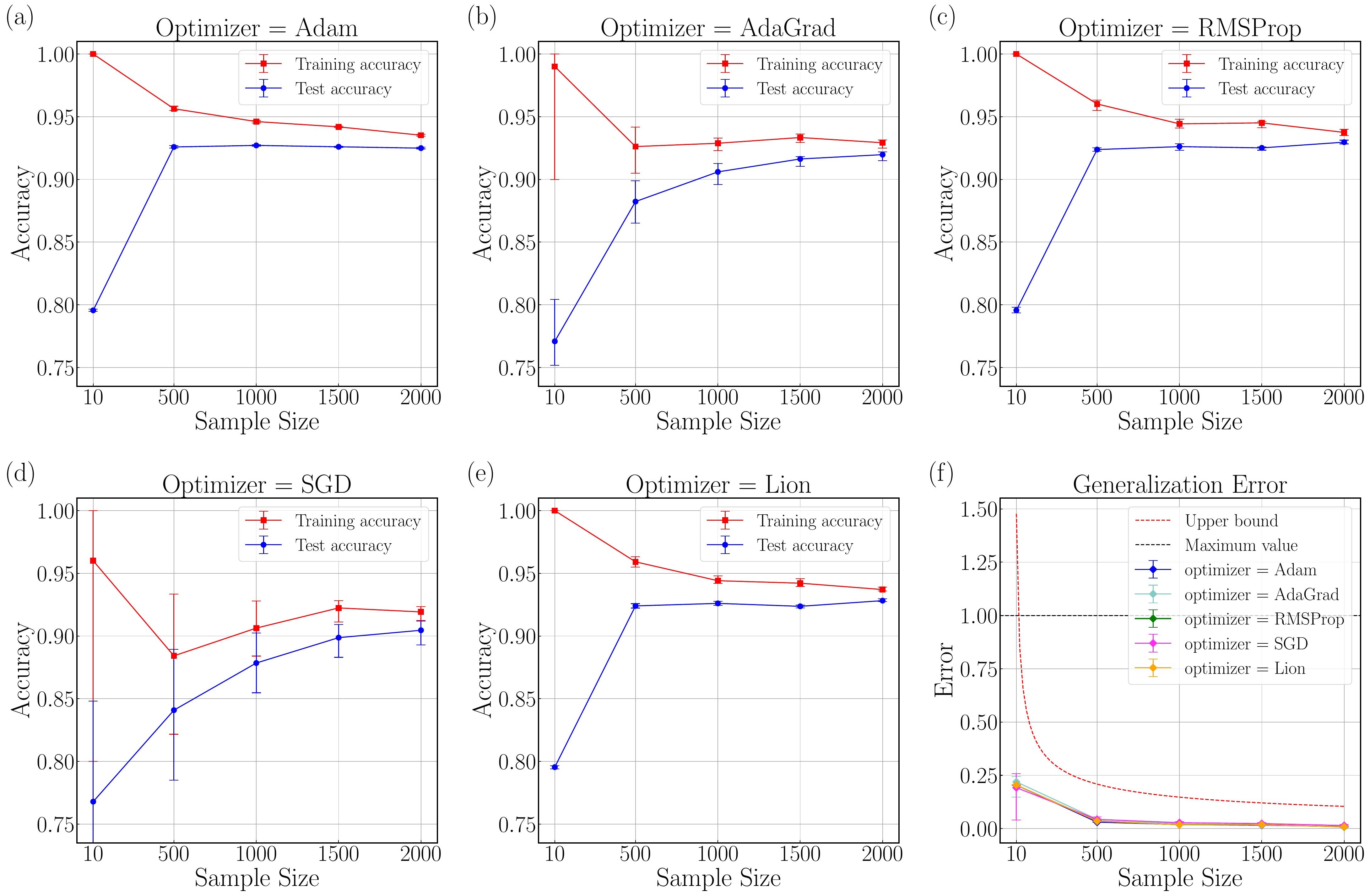}
  \caption{(a)-(e) Training accuracy and test accuracy under different optimizers; (f) Comparison between experimental generalization error and theoretical generalization upper bound with confidence level $1-\delta=0.9$. The error bars represent the minimum and maximum values across 10 independent runs with different random seeds, with the central line showing the mean value.}
  \label{fig:optimizer_bound}
\end{figure}

\end{document}